\bfseries\mathversion{bold}$
\bfseries\mathversion{bold}
\bfseries\mathversion{bold}$
\bfseries\mathversion{bold}
\bfseries\mathversion{bold}$
\bfseries\mathversion{bold}
\bfseries\mathversion{bold}$
\bfseries\mathversion{bold}
\newcommand{\optionaldesc}[2]{%
  \phantomsection
  #1\protected@edef\@currentlabel{#1}\label{#2}%
}
\tikzstyle{Large Black}=[fill=black, draw=black, shape=circle, tikzit category=GR, tikzit fill = black]
\tikzstyle{Large Empty}=[fill=white, draw=black, shape=circle, tikzit category=GR, tikzit fill =  white]
\tikzstyle{Large Empty Box}=[fill=white, draw=black, shape=rectangle, tikzit category=GR]
\tikzstyle{Small Black}=[fill=black, draw=black, shape=circle, inner sep=0pt, minimum size=6pt, tikzit category = GR]
\tikzstyle{Small Empty}=[fill=white, draw=black, shape=circle, inner sep=0pt, minimum size=6pt, tikzit category = GR]
\tikzstyle{Small Empty Box}=[fill=white, draw=black, shape=rectangle, inner sep=0pt, minimum size=8pt, tikzit category=GR]
\tikzstyle{Small Grey}=[fill={rgb,255: red,119; green,119; blue,119}, draw={rgb,255: red,119; green,119; blue,119}, shape=diamond, tikzit category=GR, tikzit fill={rgb,255: red,119; green,119; blue,119}, inner sep=0pt, minimum size=6pt]
\tikzstyle{Small Grey Empty}=[draw={rgb,255: red,119; green,119; blue,119}, shape=circle, tikzit category=GR, tikzit fill={rgb,255: red,119; green,119; blue,119}, inner sep=0pt, minimum size=6pt]
\tikzstyle{GreyTEXT}=[align=center,text={rgb,255: red,119; green,119; blue,119}]
\tikzstyle{Tiny Black}=[fill=black, draw=black, shape=circle, inner sep=0pt, minimum size=4pt, tikzit category = GR]
\tikzstyle{empty}=[-, fill={rgb,255: red,191; green,191; blue,191}, draw={rgb,255: red,191; green,191; blue,191}]
\tikzstyle{brace edge}=[-, tikzit draw=blue, decorate, decoration={brace,amplitude=1mm,raise=-1mm}]
\tikzstyle{grey}=[-, draw={rgb,255: red,119; green,119; blue,119}, densely dashed]
\tikzstyle{directed}=[->]
\theoremstyle{definition}
\newtheorem{theorem}{Theorem}[section]
\newtheorem{corollary}[theorem]{Corollary}
\newtheorem{lemma}[theorem]{Lemma}
\newtheorem*{lemma*}{Lemma}
\newtheorem*{proposition*}{Proposition}
\newtheorem*{remark*}{Remark}
\newtheorem{definition}[theorem]{Definition}
\newtheorem{example}[theorem]{Example}
\newcommand{\RM}{\mathrm}
\newcommand{\counting}[1][]{\mathord{\#}\mathrm #1}
\newcommand{\problem}[1]{\mathbf{#1}}
\newcommand{\FlowSearch}{\problem{FlowSearch}}
\newcommand{\NonSing}{\problem{NonSing}}
\newcommand{\MaxRank}{\problem{MaxRank}}
\DeclareMathOperator{\rank}{rank}
\DeclareMathOperator{\rowrank}{row rank}
\title{Pauli Flow on Open Graphs with Unknown Measurement Labels}
\author{Piotr Mitosek
\institute{School of Computer Science, University of Birmingham}
\email{pbm148@student.bham.ac.uk}
}
\apptocmd{\sloppy}{\hbadness 10000\relax}{}{}
 \def\@textbottom{\vskip \z@ \@plus 1pt}
 \let\@texttop\relax
\begin{document}

\maketitle

\begin{abstract}
    One-way quantum computation, or measurement-based quantum computation, is a universal model of quantum computation alternative to the circuit model. The computation progresses by measurements of a pre-prepared resource state together with corrections of undesired outcomes via applications of Pauli gates to yet unmeasured qubits. The fundamental question of this model is determining whether computation can be implemented deterministically. Pauli flow is one of the most general structures guaranteeing determinism. It is also essential for polynomial time ancilla-free circuit extraction. It is known how to efficiently determine the existence of Pauli flow given an open graph together with a measurement labelling (a choice of measurements to be performed).

    In this work, we focus on the problem of deciding the existence of Pauli flow for a given open graph when the measurement labelling is unknown. We show that this problem is in $\RM{RP}$ by providing a random polynomial time algorithm. To do it, we extend previous algebraic interpretations of Pauli flow, by showing that, in the case of $X$ and $Z$ measurements only, flow existence corresponds to the right-invertibility of a matrix derived from the adjacency matrix. We also use this interpretation to show that the number of output qubits can always be reduced to match the number of input qubits while preserving the existence of flow.
\end{abstract}

\section{Introduction}
One-way quantum computation, or measurement-based quantum computation (MBQC), is a quantum computation model alternative to the circuit model, yet at least as powerful \cite{raussendorfOneWayQuantumComputer2001, raussendorfOnewayQuantumComputer2002, raussendorfMeasurementbasedQuantumComputation2003}. The computation is performed entirely by a series of measurements on a pre-prepared resource state. MBQC is expected to become critical for quantum communication applications. For instance, MBQC is used in blind quantum computation \cite{broadbentUniversalBlindQuantum2009}, a quantum protocol allowing a client to outsource part of the computation to a server without compromising the client's privacy. However, the quantum measurements are random by their nature, which means that in each step two different outcomes may be obtained, where only one is the desired outcome. Therefore, it is the fundamental problem of MBQC to determine whether the desired computation can be performed deterministically.

A structure essential for deterministic MBQC is flow defined on a labelled open graph -- a graph state with designated input and output qubits together with a choice of measurements for non-outputs. The computation is performed by a series of measurements of non-outputs in one of six settings: planar $XY$, $XZ$, and $YZ$ measurements and Pauli $X$, $Y$, and $Z$ measurements. After each measurement, a correction might be applied which is intended to fix a potential undesired outcome, bringing the open graph into a state equivalent to one that would be reached if a desired outcome was observed. The corrections are performed by applying Pauli gates to yet unmeasured qubits. One of the notions of determinism for such computation corresponds to the existence of flow structure in the labelled open graph. There are many different types of flow, ranging from the causal flow \cite{danosDeterminismOnewayModel2006}, through generalized flow \cite{browneGeneralizedFlowDeterminism2007}, all the way to the Pauli flow \cite{browneGeneralizedFlowDeterminism2007} and its extended variant \cite{mhallaCharacterisingDeterminismMBQCs2022}. Every causal flow is also a generalized flow. Every generalized flow is also a Pauli flow. However, the reverses are not true, thus making Pauli flow the most universal.

The flow is also crucial when working with ZX calculus \cite{coeckeInteractingQuantumObservables2011, vandeweteringZXcalculusWorkingQuantum2020a}, a graphical calculus for quantum computation, where the notions of flow correspond to the most general subclasses of diagrams for which a polynomial time ancilla-free circuit extraction algorithm exists \cite{backensThereBackAgain2021, simmonsRelatingMeasurementPatterns2021}. While diagrams from graphical calculi are easier to work with than circuits, they are not a notion of computation understood by quantum computers. For that, a corresponding circuit must be found, however, circuit extraction in general is $\counting[P]$-hard \cite{debeaudrapCircuitExtractionZXDiagrams2022c}. In \cite{duncanGraphtheoreticSimplificationQuantum2020}, circuits are optimized by translating them to ZX, simplifying diagrams, and translating them back to circuits. Other versions of this approach have appeared since for example in \cite{staudacherReducing2QuBitGate2023}. There, diagram transformations necessarily must preserve flow, which was a subject of for instance \cite{mcelvanneyCompleteFlowPreservingRewrite2023, mcelvanneyFlowpreservingZXcalculusRewrite2023a}.

While the existence of different flow variants can be efficiently determined for a given labelled open graph \cite{debeaudrapFindingFlowsOneway2008,debeaudrapCompleteAlgorithmFind2007,mhallaFindingOptimalFlows2008a,backensThereBackAgain2021,simmonsRelatingMeasurementPatterns2021}, many questions remain open. For instance, what are other properties of labelled open graphs with flow? Previous works looking at the necessary properties the graph must satisfy to contain flow include \cite{mhallaWhichGraphStates2014a, markhamEntanglementFlowClassical2014, simmonsRelatingMeasurementPatterns2021}. We contribute to the general picture of what flow is and which graphs exhibit it.

\paragraph{Results overview} In this work, we show that given an (unlabelled) open graph it is possible to efficiently determine whether there exists a choice of measurements resulting in flow. In particular, we show the following problem is in $\RM{RP}$, meaning that there exists a random polynomial-time algorithm solving it.

\begin{quote}
    $\FlowSearch$\\
    \textbf{Input:} An open graph $(G, I, O)$.\\
    \textbf{Output:} $True$ if there exists a measurement labelling $\lambda$ such that the labelled open graph $(G, I, O, \lambda)$ exhibits Pauli flow, and $False$ otherwise.
\end{quote}

To show that, we adapt the algebraic interpretation of flow from \cite{mhallaWhichGraphStates2014a}. In that paper, the authors relate the existence of generalised flow for $XY$ planar measurements only to the existence of the right inverse of a particular matrix satisfying additional requirements. We relate Pauli flow in the case of only $X$ and $Z$ measurement to just right invertibility. Then, we use the algebraic interpretation to reduce $\FlowSearch$ to $\MaxRank$, a problem about maximizing the rank of a matrix whose entries can contain variables. Our approach is limited to finding labelling consisting of Pauli $X$ and $Z$ measurements. The corresponding measurement scheme is Clifford. Thus, the current algorithm may not be ideal for searching more complex labels which make the corresponding labelled open graph have Pauli flow. However, as we will prove, the graph having Pauli flow for some measurement labels also has flow for labels consisting of only Pauli $X$ and $Z$ measurements. In other words, if a graph does not have flow for some label consisting of just $X$ and $Z$ measurements then it cannot have flow for any label. Therefore, our approach is sufficient to solve $\FlowSearch$.

Further, we use the algebraic interpretation of flow to explore the conditions on the sets of inputs and outputs necessary to have flow. In particular, we prove that if a labelled open graph has Pauli flow, then some outputs can be removed and the resulting open graph will still have Pauli flow for some measurement labelling.

\paragraph{Structure}
In section \ref{background}, we present the background. Section \ref{main section} is the main part of the paper. There, we formally describe our results and prove them. Finally, in section \ref{conclusions and further work}, we discuss the conclusions and possible further work. We include some technical proofs and additional examples in the appendices.

\section{Background}\label{background}
In this section, we first look at the general concept of measurement-based quantum computation, where we define Pauli flow. Next, we explain computational complexity definitions critical in our proofs.

\subsection{Measurement-Based Quantum Computation}
As outlined earlier, we focus on the version of MBQC where measurements are performed on open graphs and the allowed measurements are single-qubit: Pauli measurements or planar measurements from $XY$, $YZ$, or $XZ$ planes. The description of the computation is given in the form of measurement patterns. The labelled open graphs can be viewed as runnable measurement patterns without corrections \cite{backensThereBackAgain2021}.

\begin{definition}[Open Graph]
    An \textbf{open graph} is a triple $(G,I,O)$ consisting of:\begin{itemize}[noitemsep]
        \item an undirected graph $G = (V,E)$,
        \item a set of \textbf{inputs} $I \subseteq V$,
        \item a set of \textbf{outputs} $O \subseteq V$.
    \end{itemize}
    We define the sets of \textbf{non-inputs} $\overline{I} := V \setminus I$, \textbf{non-outputs} $\overline{O} := V \setminus O$ and \textbf{internal vertices} $B := V\setminus (I\cup O) = \overline{I} \cap \overline{O}$.

    An \textbf{odd neighbourhood} of a set of vertices $A$ is denoted $Odd(A)$ and consists of all vertices neighbouring an odd number of elements of $A$: $Odd(A) := \{ v \in V \mid \#\{ a \in A \mid va \in E \} \text{ is odd}\}$.
\end{definition}

\begin{definition}
    A \textbf{measurement labelling} for an open graph $(G,I,O)$ is any function $\lambda$ sending non-outputs $\overline{O}$ to labels $\{ X, Y, Z, XZ, XY, YZ \}$, satisfying $\lambda(v) \in \{ X, XY, Y \}$ for all $v \in I \setminus O$. A \textbf{labelled open graph} is a quadruple $(G,I,O,\lambda)$ where $(G,I,O)$ is an open graph and $\lambda$ is a measurement labelling.
\end{definition}

An open graph is prepared by entangling input qubits, corresponding to the vertices in $I$, with qubits prepared in $\ket{+}$ corresponding to the vertices in $\overline{I}$, by applying $CZ$ gate between pair of qubits corresponding to each edge in the graph. Since the $CZ$ gates commute, the order of $CZ$ applications does not matter. When the set of inputs is empty, the notion of an open graph collapses to the graph state. Next, qubits corresponding to the elements of $\overline{O}$ are measured according to the measurement labelling. See figure \ref{fig:ex233} for an example.

The computation given in the form of the measurement pattern depends on the outcomes of the measurements. After each measurement, an undesired outcome may occur and it must be corrected. The corrections are performed by applying Pauli $X$ and $Z$ gates to the yet unmeasured qubits. The result of a chosen measurement outcome, together with corrections, is called a branch. There are many different notions of determinism \cite{browneGeneralizedFlowDeterminism2007, mhallaWhichGraphStates2014a}. The one we focus on is the strong, uniform, and stepwise determinism -- all branches of the computation are equal up to a global phase, for any choice of measurement angles, and the intermediate patterns after performing a subset of measurements also have these properties \cite{browneGeneralizedFlowDeterminism2007, mhallaWhichGraphStates2014a, backensThereBackAgain2021}. The measurement pattern is strongly, uniformly, and stepwise deterministic when the corresponding labelled open graph has Pauli flow \cite{browneGeneralizedFlowDeterminism2007}. We define the notion of Pauli flow (based on \cite{simmonsRelatingMeasurementPatterns2021}).

\begin{definition}[Pauli flow]
    A \textbf{Pauli flow} for a labelled open graph $(G,I,O,\lambda)$ is a pair $(c,\prec)$ where $c$ is a function $\overline{O} \to \mathcal{P}(\overline{I})$ and $\prec$ is a strict partial order on $\overline{O}$, such that for all $u \in \overline{O}$:\begin{enumerate}[noitemsep]
        \item[{\crtcrossreflabel{(P1)}[P1]}] $\forall v \in c(u) . u \ne v \wedge \lambda(v) \notin \{ X, Y \} \Rightarrow u \prec v$
        \item[{\crtcrossreflabel{(P2)}[P2]}] $\forall v \in Odd(c(u)) . u \ne v \wedge \lambda(v) \notin \{ Y, Z \} \Rightarrow u \prec v$
        \item[{\crtcrossreflabel{(P3)}[P3]}] $\forall v \in \overline{O} . \neg (u \prec v) \wedge u \ne v \wedge \lambda(v) = Y \Rightarrow \left( v \in c(u) \Leftrightarrow v \in Odd(c(u)) \right)$
        \item[{\crtcrossreflabel{(P4)}[P4]}] $\lambda(u) = XY \Rightarrow u \notin c(u) \wedge u \in Odd(c(u))$
        \item[{\crtcrossreflabel{(P5)}[P5]}] $\lambda(u) = XZ \Rightarrow u \in c(u) \wedge u \in Odd(c(u))$
        \item[{\crtcrossreflabel{(P6)}[P6]}] $\lambda(u) = YZ \Rightarrow u \in c(u) \wedge u \notin Odd(c(u))$
        \item[{\crtcrossreflabel{(P7)}[P7]}] $\lambda(u) = X \Rightarrow u \in Odd(c(u))$
        \item[{\crtcrossreflabel{(P8)}[P8]}] $\lambda(u) = Z \Rightarrow u \in c(u)$
        \item[{\crtcrossreflabel{(P9)}[P9]}] $\lambda(u) = Y \Rightarrow \left( u \in c(u) \oplus u \in Odd(c(u)) \right)$, where $\oplus$ stands for XOR.
    \end{enumerate}
    We call $c$ the \textbf{correction function} and sets $c(v)$ for $v \in \overline{O}$ are called the \textbf{correction sets}.
\end{definition}

\begin{example}\label{flow def example}
    Consider open graphs from figure \ref{fig:ex233}. The graph from \ref{fig:ex233 no flow} does not have Pauli flow. On the other hand, the graph from \ref{fig:ex233 1z flow} has Pauli flow. For instance, taking $D \prec A$ and $D \prec B$ with the following correction function result in the flow:
    \begin{align*}
        c(A) &= \{ C, E \},\quad &Odd(c(A)) &= \{ A, F, G, H \},
        &c(B) &= \{ E \},\quad &Odd(c(B)) &= \{ B, F \},\\
        c(C) &= \{ G \},\quad &Odd(c(C)) &= \{ C, D \},
        &c(D) &= \{ D \},\quad &Odd(c(D)) &= \{ A, B, G, H \},\\
        c(E) &= \{ F \},\quad &Odd(c(E)) &= \{ E \}.
    \end{align*}
\end{example}

In MBQC, the meaning of the Pauli flow is as follows. The prepared open graph state is measured according to the partial order. When a measurement of a vertex $u$ results in an undesired outcome, then the measurement error can be fixed by applying the $X$ gate to all vertices in $c(u)$. The flow guarantees that each correction is physically possible and that every error can be corrected independently of the outcomes of the previous measurements. In other words, the MBQC becomes deterministic. For a detailed explanation, see \cite{simmonsRelatingMeasurementPatterns2021}.

\begin{figure}
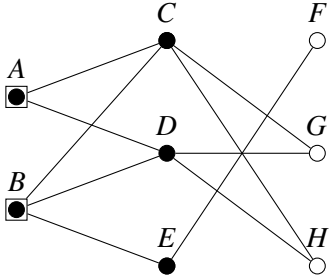
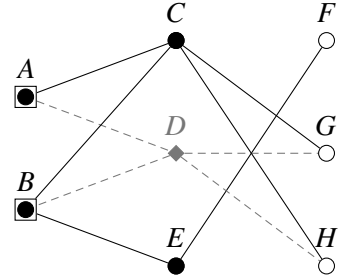

    \centering
    \begin{subfigure}[b]{0.4\textwidth}
        \centering
        $\scalebox{1}{\tikzfig{ex_233_no_flow}}$
        \caption{A labelled open graph with two inputs and three outputs. All non-outputs are $X$ labelled.}
        \label{fig:ex233 no flow}
    \end{subfigure}
    \hfill
    \begin{subfigure}[b]{0.4\textwidth}
        \centering
        $\scalebox{1}{\tikzfig{ex_233_1z_flow}}$
        \caption{The labelled open graph from \ref{fig:ex233 no flow}, but with vertex $D$ labelled $Z$.}
        \label{fig:ex233 1z flow}
    \end{subfigure}

    \caption{Examples of labelled open graphs. Outputs are denoted by empty circles, $X$ labelled non-outputs by filled circles and inputs by a square. The $Z$ labelled vertices are denoted with grey diamonds and edges including them are dashed and grey.}
    \label{fig:ex233}
\end{figure}

\subsection{Computational Complexity}
We assume familiarity with standard complexity terminology. We are mainly concerned about the \textbf{random polynomial time} class $\RM{RP}$, where randomness is permitted with one-sided bounded error:

\begin{definition}
    The class of $\RM{RP}$ consists of problems $A$ solvable by a non-deterministic Turing Machine $M$ that given input $a$ proceeds as follows:\begin{itemize}[noitemsep]
        \item if $a$ is a ``NO'' instance, then $M$ always rejects $a$,
        \item if $a$ is a ``YES'' instance, then $M$ accepts $a$ on at least half of its computation paths.
    \end{itemize}
\end{definition}

For a problem in $\RM{RP}$ there exists a polytime algorithm with random number generator access that for ``NO'' instances always returns ``NO'' and for ``YES'' instances it returns ``YES'' with probability at least $\frac{1}{2}$ and otherwise it returns ``NO''. This last case is the error of the computation and happens with the probability smaller than $\frac{1}{2}$. By running the algorithm multiple times it is possible to reduce the error probability -- $50$ runs results in error probability below $\frac{1}{2^{50}}$ which is sufficient for all practical applications.

The following is the essential problem in our work. The definition is adapted from \cite{bussComputationalComplexityProblems1999} (in contrast, we do not require the input matrix to be square).

\begin{quote}
    $\MaxRank$\\
    \textbf{Fixed:} A commutative ring $R$, and subsets $E, S \subseteq R$ of entries and solutions.\\
    \textbf{Input:} Natural numbers $m, n, t, r$ and $m \times n$ matrix $M$ with entries from $E \cup \{ x_1, \dots, x_t \}$.\\
    \textbf{Output:} $True$ if $\rank M(a_1, \dots, a_t) \ge r$ for some $a_1 \dots a_t \in S^t$, and $False$ otherwise.
\end{quote}

$M(a_1,\dots,a_t)$ stands for the matrix with substituted variables $x_1 \mapsto a_1, \dots, x_t \mapsto a_t$. We skip the specification of $m,n,t$, as these numbers are explicit from the input matrix. Thus, we will write instances of $\MaxRank$ as pairs $(M,r)$ of the matrix and the desired minimal rank under some valuation.

In \cite{bussComputationalComplexityProblems1999}, it is shown that when $R$ is a finite field and each variable occurs at most once, the problem is in $\RM{RP}$. We extend their approach to show that the problem is in $\RM{RP}$ also when each variable appears in at most one row or one column. For that, we need the notion of multi-affine polynomials (adapted from \cite{bussComputationalComplexityProblems1999}). In general, the $\MaxRank$ problem with $R$ being a finite field is $\RM{NP}$-complete \cite{bussComputationalComplexityProblems1999}.

\begin{definition}
    A multivariable polynomial is \textbf{multi-affine} when each variable has a degree at most one.
\end{definition}

By extending the proof \cite[Theorem 28]{bussComputationalComplexityProblems1999}, we get the following version. Here, the entries of the matrix are allowed to be given by multi-affine expressions over $E \cup \{ x_1, \dots, x_t \}$, not just elements of the set.

\begin{theorem}\label{main coco result}
    The following version of $\MaxRank$ is in $\RM{RP}$:\begin{itemize}[noitemsep]
        \item $R = E = S = \mathbb{F}_s$ is a finite field,
        \item each variable appears in at most one row or at most one column\footnote{That is if a variable appears in entries at positions $(a_1,b_1), (a_2,b_2), (a_3, b_3), \dots$, then either $a_1 = a_2 = a_3 = \dots$ or $b_1 = b_2 = b_3 = \dots$.},
        \item the entries are given by (polynomially long) multi-affine expressions over $E \cup \{ x_1, \dots, x_t \}$.
    \end{itemize}
\end{theorem}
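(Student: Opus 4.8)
The plan is to follow the structure of the Buss–Frandsen–Shallit proof (their Theorem 28) and pinpoint exactly where the single-occurrence hypothesis is used, then show the weaker hypothesis — each variable confined to one row or one column — suffices. The starting observation is the standard one: $\rank M(a_1,\dots,a_t) \ge r$ for some valuation in $S^t = \mathbb{F}_s^t$ if and only if \emph{some} $r \times r$ minor of $M$ is not the zero polynomial as a formal expression over $\mathbb{F}_s[x_1,\dots,x_t]$. So the decision problem reduces to: given the matrix $M$ and the target $r$, decide whether there exists an $r \times r$ submatrix whose determinant is a nonzero polynomial. The randomized algorithm is then the obvious one: pick a random valuation $(a_1,\dots,a_t) \in \mathbb{F}_s^t$ (extending the field if $s$ is too small to give good error bounds), compute $\rank M(a_1,\dots,a_t)$ in polynomial time by Gaussian elimination, and answer \emph{True} iff the observed rank is $\ge r$. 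Correctness in one direction is trivial (an observed rank is a genuinely achievable rank); the content is the converse, which is a Schwartz–Zippel argument.

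The key step is therefore the degree bound that feeds Schwartz–Zippel. For a genuinely single-occurrence matrix, each variable appears in at most one entry, so the determinant of any $r \times r$ submatrix — being a signed sum of products of $r$ entries, one from each row and column — contains each variable to degree at most one, i.e.\ the minor is multi-affine and in particular has total degree at most $r \le \min(m,n)$. I would reprove the crucial claim that \emph{the minors are multi-affine also under the weaker hypothesis}. Here is the point: in any term of the determinant expansion we pick exactly one entry from each row and each column of the chosen submatrix. If a variable $x_i$ is confined to a single row (respectively a single column), then across the $r$ chosen entries at most one of them comes from that row (respectively that column), so $x_i$ can be contributed by at most one factor of the product. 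Combined with the hypothesis that each \emph{entry} is itself a multi-affine expression, the degree of $x_i$ in any single term is at most one, and hence the determinant is multi-affine. This gives total degree at most $r$ in the formal minor, exactly as in the single-occurrence case.

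With the degree bound in hand the rest is routine. Fix any $r \times r$ submatrix with nonzero formal determinant $p(x_1,\dots,x_t)$; by multi-affineness $\deg p \le r \le n$. By Schwartz–Zippel, a uniformly random valuation over a field extension $\mathbb{F}_{s^k}$ (chosen with $s^k > 2r$, say $s^k > 2n$) has $p(a_1,\dots,a_t) \ne 0$ with probability at least $1 - r/s^k \ge 1/2$. Since $p(a) \ne 0$ forces $\rank M(a) \ge r$, a \emph{True} instance is detected with probability at least $1/2$, while a \emph{False} instance never yields rank $\ge r$ under any valuation, so the algorithm never errs on \emph{False} instances. That is precisely membership in $\RM{RP}$. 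Polynomiality holds because evaluating each multi-affine entry at the sampled point, and the subsequent Gaussian elimination over $\mathbb{F}_{s^k}$, both run in time polynomial in the input size (the entries are given by polynomially long expressions) and in $k = O(\log n)$.

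**The main obstacle** I anticipate is not the algorithm but the degree argument: I must argue carefully that the row-or-column restriction still forces multi-affineness, since a variable may now occur in many entries (all in one row or all in one column), which superficially threatens higher-degree terms. The resolution is the pigeonhole observation above — a determinant term selects at most one entry per row and per column, so a variable trapped in a single line contributes to at most one factor — and I would make sure to state it as a clean lemma before invoking Schwartz–Zippel. A secondary technical point to handle is that the multi-affine \emph{entry} expressions could in principle share a variable between an entry in the confined row/column and entries elsewhere; but the hypothesis forbids this (the variable's occurrences are \emph{all} in the one row or \emph{all} in the one column), so no such cross-contamination arises and the bound goes through.
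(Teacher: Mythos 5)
Your write-up correctly identifies and proves the pigeonhole fact that the row-or-column hypothesis keeps every $r \times r$ minor multi-affine (your Leibniz-expansion argument is equivalent to the paper's Laplace-expansion argument), but the correctness analysis of the algorithm has a genuine gap: you conflate valuations over $S = \mathbb{F}_s$, which is what $\MaxRank$ quantifies over, with valuations over the extension $\mathbb{F}_{s^k}$, which is what the algorithm samples. The $\RM{RP}$ requirement that a ``NO'' instance is never accepted is exactly the claim that if every valuation in $\mathbb{F}_s^t$ gives rank $< r$, then every valuation in $\mathbb{F}_{s^k}^t$ does too; equivalently, a minor vanishing identically as a function on $\mathbb{F}_s^t$ must be the zero polynomial. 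For general polynomials over a small field this is false ($x^2+x$ vanishes on $\mathbb{F}_2$ but not on $\mathbb{F}_4$), and it is precisely here --- not in the degree bound feeding Schwartz--Zippel --- that multi-affinity is irreplaceable. Your opening equivalence (``rank $\ge r$ achievable over $S^t$ iff some minor is formally nonzero'') asserts this as ``the standard one'' without proof, and your closing claims --- ``an observed rank is a genuinely achievable rank'' and ``a \emph{False} instance never yields rank $\ge r$ under any valuation'' --- do not hold as stated, since the observed rank is achieved over $\mathbb{F}_{s^k}$ while the problem demands a witness in $\mathbb{F}_s^t$. The paper closes this hole with lemma \ref{0 small 0 big} (a multi-affine polynomial is zero over $\mathbb{F}_s$ iff it is zero over $\mathbb{F}_{s^k}$, imported from \cite{bussComputationalComplexityProblems1999}), which is what makes the ``NO'' side of the one-sided error sound; your proof needs this lemma but never states or establishes it.

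There is also a quantitative error downstream of a misreading of the hypotheses. You claim the minors have ``total degree at most $r$, exactly as in the single-occurrence case'', and accordingly choose $s^k > 2r$ (or $> 2n$). But the theorem allows entries that are arbitrary polynomially long multi-affine expressions --- the paper's own application has entries $x_v y_v$ and $(1+x_v)(1+y_v)$ --- so one entry can contribute many variables, and the bound multi-affinity gives is total degree at most $t$, which is not controlled by $r$ or $n$. Concretely, the $1 \times 1$ matrix whose entry is $x_1 x_2 \cdots x_t$, with $r=1$, satisfies all hypotheses and is a ``YES'' instance (set every $x_i = 1$), yet over $\mathbb{F}_4$ --- which meets your requirement $s^k > 2n = 2$ --- your algorithm accepts with probability $(3/4)^t$, which drops below $\tfrac{1}{2}$ already for $t \ge 3$, violating the $\RM{RP}$ acceptance guarantee. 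The repair is the paper's parameter choice $k = \lceil \log_s (t/p) \rceil$, i.e.\ a field with at least $t/p \ge 2t$ elements, matched with the degree bound $d \le t$.
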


We prove the above theorem in the appendix \ref{more about maxrank}, where we also give an example and talk about other known results for $\MaxRank$ and similar problems.

\section{Finding labelling resulting in Pauli flow}\label{main section}
We start by formally defining theorems capturing our results. We show that given an open graph $(G, I, O)$, there exists a random polytime algorithm deciding the existence of measurement labelling $\lambda$ such that the labelled open graph $(G, I, O, \lambda)$ has Pauli flow, i.e we show that $\FlowSearch$ defined in the introduction is in $\RM{RP}$.

\begin{theorem}\label{main result}
    $\FlowSearch$ is in $\RM{RP}$.
\end{theorem}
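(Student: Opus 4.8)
The plan is to place $\FlowSearch$ in $\RM{RP}$ in three stages: first reduce the search over all measurement labellings to a search over labellings using only the Pauli measurements $X$ and $Z$; then give a linear-algebraic characterisation of Pauli flow for such labellings in terms of right-invertibility; and finally encode the remaining search as a single instance of the restricted $\MaxRank$ problem to which Theorem \ref{main coco result} applies, yielding the random polynomial-time algorithm.

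For the algebraic characterisation, fix the adjacency matrix $A$ of $G$ over $\mathbb{F}_2$, so that for any $S\subseteq V$ the indicator of $Odd(S)$ is $A\mathbf{1}_S$. Given an $X/Z$-labelling $\lambda$, I would define the $|\overline{O}|\times|\overline{I}|$ matrix $\mathcal{M}_\lambda$ whose row $u$ is the row $A_u$ restricted to the columns $\overline{I}$ when $\lambda(u)=X$, and the restricted indicator vector $e_u$ when $\lambda(u)=Z$ (note that $Z$ occurs only on non-inputs, so $u\in\overline{I}$ and $e_u$ is a genuine column). Representing a candidate correction function by the matrix $C$ over $\mathbb{F}_2$ whose column $u$ is the indicator of $c(u)\subseteq\overline{I}$, the product $\mathcal{M}_\lambda C$ has $(w,u)$-entry $[w\in Odd(c(u))]$ when $w$ is $X$-labelled and $[w\in c(u)]$ when $w$ is $Z$-labelled. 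Conditions (P7) and (P8) say exactly that this product has $1$s on its diagonal, while (P1) and (P2) say each off-diagonal $1$ forces an order relation; a flow is thus a choice of $C$ making $\mathcal{M}_\lambda C$ triangularisable with unit diagonal. I claim this is equivalent to right-invertibility of $\mathcal{M}_\lambda$: if a flow exists then $\mathcal{M}_\lambda C$ is unit-diagonal triangular, hence invertible, so $\mathcal{M}_\lambda$ has full row rank; conversely a right inverse $C$ with $\mathcal{M}_\lambda C=I$ has vanishing off-diagonal, making (P1)–(P2) vacuous, so $C$ with the empty order is a valid Pauli flow. Hence $(G,I,O,\lambda)$ has Pauli flow iff $\mathcal{M}_\lambda$ has full row rank $|\overline{O}|$.

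Next I would encode the $X$-versus-$Z$ choice at each internal vertex as one variable. For $u\in B$ introduce $y_u\in\mathbb{F}_2$ and set row $u$ of a matrix $M$ to $y_uA_u+(1+y_u)e_u$ restricted to $\overline{I}$, whose $(u,u)$-entry is $1+y_u$ and whose $(u,w)$-entry for a neighbour $w\neq u$ is $A_{u,w}y_u$; this row equals $A_u$ when $y_u=1$ and $e_u$ when $y_u=0$, recovering the $X$- and $Z$-rows, while vertices of $I\setminus O$ are forced to $X$ and contribute the constant row $A_u$. Every entry is multi-affine and each $y_u$ occurs only in row $u$, so $(M,|\overline{O}|)$ is a legal instance of the restricted $\MaxRank$ of Theorem \ref{main coco result}, and $\MaxRank(M,|\overline{O}|)$ returns $True$ exactly when some $X/Z$-labelling makes $\mathcal{M}_\lambda$ of full row rank, i.e. admits Pauli flow. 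Theorem \ref{main coco result} then places this decision in $\RM{RP}$.

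What remains, and what I expect to be the main obstacle, is the first stage: showing that if $(G,I,O)$ admits Pauli flow under some labelling then it does so under an $X/Z$-labelling (the converse being trivial, as an $X/Z$-flow is a flow). Given a flow $(c,\prec)$ for an arbitrary $\lambda$, I would keep $c$ and relabel each non-output $u$ by $X$ if $u\in Odd(c(u))$ and by $Z$ otherwise; a case check on (P4)–(P9) shows every label guarantees $u\in Odd(c(u))$ or $u\in c(u)$, so this is well defined and (P7)–(P8) hold for $c$. The delicate point is producing an order for the new labelling: comparing the triggers of (P1)–(P2) under $\lambda$ and under the relabelling, all forced relations already lie in $\prec$ except possibly those created at vertices originally labelled $Y$, where (P1)/(P2) did not previously fire. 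Here I would invoke (P3): for an original $Y$-vertex $v$ with $\neg(u\prec v)$, the biconditional $v\in c(u)\Leftrightarrow v\in Odd(c(u))$ governs exactly the configurations generating the new order demands, and the plan is to prove the induced constraint digraph acyclic by tracing any putative cycle back to a $\prec$-cycle through repeated use of (P3). Making this acyclicity argument rigorous for the $Y$-labelled vertices is the technical heart of the proof; all other cases reduce cleanly to the original order.
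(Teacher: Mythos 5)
Your second and third stages are essentially correct, and they in fact take a cleaner route than the paper: you prove the right-invertibility characterisation directly (any flow makes $\mathcal{M}_\lambda C$ unit-diagonal and, after permuting rows and columns by a linear extension of $\prec$, triangular, hence $\mathcal{M}_\lambda$ has full row rank; a right inverse conversely yields a flow with empty order), whereas the paper obtains its theorem \ref{flow mat invertibility} by combining an unpublished result for $X$/$XY$ measurements (theorem \ref{red adj matrix invertibility with DAG}) with $Z$-removal/introduction lemmata. Your encoding with a single variable $y_u$ per internal vertex also avoids the paper's two variables $x_v,y_v$ per vertex and the case analysis in theorem \ref{main helper} for valuations with $\sigma(x_v)\neq\sigma(y_v)$; each $y_u$ occurs only in row $u$ and all entries are multi-affine, so theorem \ref{main coco result} indeed applies.

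The genuine gap is your first stage, and it is worse than the missing rigour you acknowledge: the plan of keeping $c$ fixed and relabelling cannot be completed. Concretely, take vertices $v,w$ and outputs $o_1,o_2$, edges $vw$, $vo_1$, $wo_2$, with $I=\emptyset$, $\lambda(v)=\lambda(w)=Y$, and $c(v)=\{w,o_2\}$, $c(w)=\{v,o_1\}$. Then $Odd(c(v))=\{v,w,o_2\}$ and $Odd(c(w))=\{v,w,o_1\}$, and one checks that $(c,\emptyset)$ satisfies \ref{P1}--\ref{P9} and is even focussed. Your rule relabels both $v$ and $w$ to $X$; since $w\in Odd(c(v))$ and $v\in Odd(c(w))$, condition \ref{P2} for the new labelling forces both $v\prec' w$ and $w\prec' v$, a two-cycle. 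So the constraint digraph can be cyclic even though the original order was empty: the obstruction is not a hidden $\prec$-cycle that repeated use of \ref{P3} could expose, but the correction function itself, which must be changed along with the labels. This is precisely how the paper's theorem \ref{red to x z} proceeds: first pass to a focussed flow with empty order (theorem \ref{foc flow exists} and lemma \ref{no order}), relabel a \emph{single} $Y$-vertex $u$ to $X$ or $Z$ according to \ref{P9}, place $u$ last in the order so that every new constraint points into $u$, and then \emph{re-focus} (thereby altering $c$) before treating the next $Y$-vertex. In the example above, this re-focussing would replace $c$ by, say, $c(v)=\{o_1\}$, $c(w)=\{o_2\}$. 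Without the focussing step, or some other mechanism that modifies the correction sets, your reduction to $X/Z$-labellings, and hence the whole algorithm, does not go through.
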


\begin{example}
    Consider the open graph from figure \ref{fig:ex233}, ignoring the labelling. When considered as an instance of $\FlowSearch$, the answer is $True$, because the labelling from figure \ref{fig:ex233 1z flow} results in Pauli flow.
\end{example}

To show the above theorem, we expand on the known \cite{mhallaWhichGraphStates2014a} correspondence between flow and certain algebraic properties of various matrices. We also show that given an open graph with more outputs than inputs, it is always possible to reduce the number of outputs to match the number of inputs, while preserving the flow.

\begin{theorem}\label{removing outputs}
    Suppose $(G,I,O,\lambda)$ has Pauli flow and $|O|>|I|$. Then there exists a subset $O' \subseteq O$ such that $|O'| = |I|$ and a labelling $\lambda'$ such that $(G,I,O',\lambda')$ has Pauli flow.
\end{theorem}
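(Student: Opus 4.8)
The plan is to transport flow existence into a statement about independence in a single matroid on the vertex set, after which the result follows by repeated matroid augmentation. First I would reduce to $X/Z$ labellings: by the reduction established above, a labelled open graph has Pauli flow for some labelling iff it has Pauli flow for some labelling using only $X$ and $Z$, so the hypothesis supplies an $X/Z$-labelling for which $(G,I,O)$ has Pauli flow. By the algebraic characterisation established above, for any $O' \subseteq O$ the open graph $(G,I,O')$ admits an $X/Z$-labelling with Pauli flow iff its flow matrix over $\mathbb{F}_2$ is right-invertible for some labelling, equivalently iff that matrix attains full row rank $|\overline{O'}|$ for some labelling. Its columns are indexed by $\overline{I}$, its rows by the measured vertices $\overline{O'}$, and the row of a vertex $v$ equals the standard basis vector $e_v$ when $v$ is labelled $Z$ (admissible only for $v\in\overline{I}$) and the $v$-th row $a_v$ of the adjacency matrix restricted to the columns $\overline{I}$ when $v$ is labelled $X$. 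The point I would stress is that the two candidate rows $e_v,a_v\in\mathbb{F}_2^{\overline{I}}$ depend only on $v$ and the fixed data $(G,I)$, not on $O'$; varying $O'$ only changes which rows are present.

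Next I would recast this as independence in a matroid. For $P\subseteq V$, let $\rho(P)$ be the maximum over valid labellings of the rank of the row set $\{r(v): v\in P\}$, where $r(v)\in\{e_v,a_v\}$ with $e_v$ admissible iff $v\in\overline{I}$; then $(G,I,O')$ has $X/Z$ flow iff $\rho(\overline{O'})=|\overline{O'}|$. The crucial structural step is that $\rho$ is the rank function of a matroid $\mathcal{M}$ on ground set $V$, namely the matroid induced through the per-vertex two-choice correspondence $v\mapsto\{e_v,a_v\}$ from the $\mathbb{F}_2^{\overline{I}}$-vector matroid on $\{e_v\}_{v\in\overline{I}}\cup\{a_v\}_{v\in V}$; such induced structures are matroids by the standard matroid-induction theorem. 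Consequently flow existence for $(G,I,O')$ is exactly independence of $\overline{O'}$ in the one fixed, $O'$-independent matroid $\mathcal{M}$. I expect this step to be the main obstacle: verifying that the label-maximised rank is genuinely a matroid rank, and hence submodular, is what powers the exchange argument below, and submodularity is not evident directly from the maximisation over labellings.

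I would then compute $\rank(\mathcal{M})$ and perform one augmentation. Labelling every non-input (that is, every vertex of $\overline{I}$) by $Z$ produces the rows $\{e_v: v\in\overline{I}\}$, a full standard basis of $\mathbb{F}_2^{\overline{I}}$, so $\rho(V)=|\overline{I}|$ and $\rank(\mathcal{M})=|\overline{I}|$. Since $|O|>|I|$ gives $|\overline{O}|<|\overline{I}|$, the independent set $\overline{O}$ is not a basis of $\mathcal{M}$, so by matroid augmentation there is a vertex $w$ with $\overline{O}\cup\{w\}$ independent; as $|\overline{O}\cup\{w\}|>|\overline{O}|$ forces $w\notin\overline{O}$, we get $w\in O$. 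Hence $\rho(\overline{O\setminus\{w\}})=|\overline{O}|+1=|\overline{O\setminus\{w\}}|$, so $(G,I,O\setminus\{w\})$ has $X/Z$ flow, and therefore Pauli flow.

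Finally I would iterate inside the fixed matroid $\mathcal{M}$. As long as the current output set $O'$ satisfies $|O'|>|I|$, the measured set $\overline{O'}$ is independent of size strictly below $\rank(\mathcal{M})=|\overline{I}|$, so augmentation supplies a vertex $w\in V\setminus\overline{O'}=O'$ whose removal keeps $\overline{O'}$ independent and thus preserves flow. After exactly $|O|-|I|$ removals I reach some $O'\subseteq O$ with $|O'|=|I|$ carrying an $X/Z$ (hence Pauli) flow, which yields the required labelling $\lambda'$. A minor point worth recording is that an augmenting vertex may lie in $O\cap I$, so that it becomes a measured \emph{input}; this is permitted, since such a vertex is then labelled $X\in\{X,XY,Y\}$ and it affects neither $\overline{I}$ nor the matroid $\mathcal{M}$.
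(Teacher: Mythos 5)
Your approach is correct in substance and genuinely different from the paper's, but one assertion needs a patch before the matroid construction is legitimate. You claim that in the flow matrix the row of an $X$-labelled vertex $v$ is the plain adjacency row $a_v$ restricted to the columns $\overline{I}$, so that the candidate rows $\{e_v, a_v\}$ depend only on $v$ and $(G,I)$. That is not the flow matrix of definition~\ref{flow mat def}: there the $Z$-labelled vertices are first disconnected (the graph $G_{disc}$), so the row of an $X$-labelled vertex has its entries at $Z$-labelled neighbours zeroed out, and the $X$-rows therefore \emph{do} depend on the labelling (compare rows $A,B$ of figures \ref{fig:ex233 no flow mat} and \ref{fig:ex233 1z flow mat}). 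The fix is short: your surrogate matrix is row-equivalent to the true flow matrix, since adding each $Z$-row $e_v$ to the rows of the $X$-labelled neighbours of $v$ transforms one into the other; these are elementary row operations among rows present in both matrices, so the two have equal row rank, and your characterisation ``flow for some $X/Z$ labelling if and only if some admissible choice $r(v)\in\{e_v,a_v\}$ over $v \in \overline{O'}$ is linearly independent'' is indeed correct. With that two-line argument inserted, the rest of your proof goes through.

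As for the comparison: both proofs rest on theorem~\ref{flow mat invertibility}, but the paper works on the \emph{column} side of a single flow matrix. It fixes a maximal invertible minor and iterates: if some output's column lies outside the minor, that output is dropped from $O$ and re-introduced as a $Z$-vertex via lemma~\ref{Z intro}; otherwise every unused column belongs to an $X$-labelled internal vertex, which is relabelled to $Z$, and the process repeats, terminating because each step decreases either $|O|$ or the number of $X$-labels. Your proof instead encodes \emph{all} labellings simultaneously in one fixed matroid on $V$ (via bipartite induction, Rado--Perfect) and works on the \emph{row} side, after which the theorem is immediate from the augmentation axiom: termination is trivial (exactly $|O|-|I|$ augmentations), and the case $I\cap O\neq\emptyset$ is absorbed into the matroid (inputs get the single choice $a_v$) rather than handled by the separate reduction of theorem~\ref{no i cap o}. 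The trade-off is machinery versus transparency: the paper's argument uses nothing beyond Gaussian elimination and basis exchange and is immediately algorithmic, whereas yours leans on the nontrivial external fact that induction of a matroid through a bipartite graph yields a matroid --- precisely the submodularity concern you flagged --- in exchange for a cleaner, exchange-axiom-level conclusion.
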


Labelled open graphs with Pauli flow and equal numbers of inputs and outputs have multiple elegant properties. Firstly, they correspond to a circuit without ancillas and hence a unitary. Further, the graph and flow can be ``reversed'' (\cite[Theorem 3.4]{mhallaWhichGraphStates2014a} for generalized flow theorem). Next, there is a unique correction function in the focussed Pauli flow. In the case of $X$ and $Z$ labels only, the existence of such a correction function corresponds to the invertibility of what we call the flow matrix, rather than just right-invertibility. Finding the inverse can be implemented with faster algorithms than Gaussian elimination. Thus, it is useful to transform labeled open graphs with more outputs than inputs in order to equalise both sizes.

The remainder of this section are the proofs of the theorems \ref{main result} and \ref{removing outputs}. The first proof is divided into three subsections. In the first one, we do preliminary simplifications of the $\FlowSearch$ problem. The second subsection presents an algebraic interpretation of the Pauli flow, by giving a correspondence between Pauli flow and matrix invertibility. Finally, in the third subsection, we show how $\FlowSearch$ instances can be transformed into $\MaxRank$ instances, and that those instances can be solved by a random polytime algorithm. The proof of theorem \ref{removing outputs} is presented in the final subsection.

\subsection{Reducing measurement options}
We start by reducing the number of possible options for the measurement basis to just Pauli measurements.

\begin{theorem}\label{red to x y z}
    Suppose that a labelled open graph $(G,I,O,\lambda)$ has Pauli flow. Then, there exists $\lambda' \colon \overline{O} \to \{ X, Y, Z \}$ such that $(G,I,O,\lambda')$ has Pauli flow.
\end{theorem}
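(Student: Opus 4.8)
The plan is to show that any planar measurement label can be replaced by a Pauli label drawn from the two axes spanning its plane, while keeping the same correction function $c$ and the same partial order $\prec$, so that all nine flow conditions remain satisfied. The key observation is that the planar conditions (P4)--(P6) each assert a specific membership pattern of $u$ in $c(u)$ and $Odd(c(u))$, and these patterns are exactly what the Pauli conditions (P7)--(P9) demand for a suitable axis. Concretely, I would argue vertex-by-vertex: the partial order and correction sets are fixed throughout, and I only change the label of each planar vertex $u$.

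First I would handle the three planar cases for a non-output $u$ with $\lambda(u)\in\{XY,XZ,YZ\}$. If $\lambda(u)=XY$, then by (P4) we have $u\notin c(u)$ and $u\in Odd(c(u))$; setting $\lambda'(u)=X$ makes the only new obligation (P7), namely $u\in Odd(c(u))$, which already holds. If $\lambda(u)=YZ$, then by (P6) we have $u\in c(u)$ and $u\notin Odd(c(u))$; setting $\lambda'(u)=Z$ makes (P8), namely $u\in c(u)$, hold. If $\lambda(u)=XZ$, then by (P5) we have $u\in c(u)$ and $u\in Odd(c(u))$; here either $X$ or $Z$ works, and I would pick $\lambda'(u)=X$, for which (P7) requires $u\in Odd(c(u))$. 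In every case the relabelling is to a Pauli axis that already lies in the original plane, so the change is conservative. For vertices $u$ already carrying a Pauli label I leave $\lambda'(u)=\lambda(u)$. The input constraint $\lambda'(v)\in\{X,XY,Y\}$ for $v\in I\setminus O$ is respected because $XY\mapsto X$ and the admissible Pauli inputs $X,Y$ are unchanged.

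The subtle part, and the step I expect to be the main obstacle, is verifying that conditions (P1)--(P3) are preserved for \emph{all} vertices after the relabelling, since these conditions refer to the labels of the vertices $v$ appearing in $c(u)$ and $Odd(c(u))$ rather than to the label of $u$ itself. Changing labels can only help or hurt via the side conditions $\lambda(v)\notin\{X,Y\}$ in (P1), $\lambda(v)\notin\{Y,Z\}$ in (P2), and $\lambda(v)=Y$ in (P3). I would check that each relabelling relaxes these premises: sending $XY\mapsto X$ moves $\lambda(v)$ into $\{X,Y\}$ and into the complement of $\{Y,Z\}$ changes nothing adverse, sending $YZ\mapsto Z$ puts $\lambda(v)$ into $\{Y,Z\}$ so the premise of (P2) becomes false where it could only have been true before, and so on; crucially no vertex acquires the label $Y$ that did not already have it, so the premise of (P3) never newly activates. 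Thus every implication in (P1)--(P3) whose hypothesis holds under $\lambda'$ also held under $\lambda$, where $u\prec v$ was already guaranteed. I would present this as a short case table over the six old labels versus the three side-condition premises, confirming that each new Pauli label weakens (never strengthens) every hypothesis, which suffices since $c$ and $\prec$ are untouched.
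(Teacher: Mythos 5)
Your proposal is correct and follows essentially the same route as the paper's proof: fix the flow $(c,\prec)$, relabel $XY \mapsto X$, $XZ \mapsto X$, $YZ \mapsto Z$, and observe that the planar conditions \ref{P4}--\ref{P6} subsume the corresponding Pauli conditions \ref{P7}--\ref{P9}. Your explicit check that the hypotheses of \ref{P1}--\ref{P3} can only be weakened by these swaps is a detail the paper's one-line proof leaves implicit, but it is the same argument.
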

\begin{proof}
    We start by fixing a Pauli flow $(c,\prec)$ on $(G,I,O,\lambda)$. The conditions for a planar measurement $XY$ combine the requirements for the two Pauli measurements $X$ and $Y$. Hence, swapping $XY$ measurements to $X$ preserves $(c,\prec)$ as the Pauli flow. Similarly, we can swap $XZ$ to $X$ and $YZ$ to $Z$.
\end{proof}

There can be many different Pauli flows for a single labelled open graph. However, there exists a special type of Pauli flow known as the \textbf{focussed Pauli flow}. The following definition is based on \cite[Definition 4.3]{simmonsRelatingMeasurementPatterns2021}.

\begin{definition}[Focussed Pauli flow]
    The Pauli flow $(c,\prec)$ is \textbf{focussed} when for all $v \in \overline{O}$ the following hold:\begin{enumerate}[noitemsep]
        \item[{\crtcrossreflabel{(F1)}[F1]}] $\forall w \in (\overline{O} \setminus \{ v \}) \cap c(v) . \lambda(w) \in \{ XY, X, Y \}$
        \item[{\crtcrossreflabel{(F2)}[F2]}] $\forall w \in (\overline{O} \setminus \{ v \}) \cap Odd(c(v)) . \lambda(w) \in \{ XZ, YZ, Y, Z \}$
        \item[{\crtcrossreflabel{(F3)}[F3]}] $\forall w \in (\overline{O} \setminus \{ v \}) . \lambda(w) = Y \Rightarrow (w \in c(v) \Leftrightarrow w \in Odd(c(v)))$
    \end{enumerate}
\end{definition}

\begin{example}
    Consider the Pauli flow from the example \ref{flow def example} -- it is not focussed as $Odd(c(D))$ contains $X$ labelled vertices $A$ and $B$. Changing $c(D)$ to $\{ C, D \}$ results in $Odd(c(D)) = \emptyset$ and the flow becomes focussed.
\end{example}

Importantly, the existence of flow is equivalent to the existence of focussed Pauli flow, as captured by the following theorem \cite[Lemma 4.6]{simmonsRelatingMeasurementPatterns2021}:

\begin{theorem}[{\cite[Lemma 4.6]{simmonsRelatingMeasurementPatterns2021}}]\label{foc flow exists}
    For any open labelled graph, if a Pauli flow exists then there also exists a focussed Pauli flow.
\end{theorem}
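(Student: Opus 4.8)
The plan is to prove this by an explicit signal-shifting argument that turns an arbitrary Pauli flow into a focussed one while keeping the same partial order $\prec$. The key preliminary observation is that each of the conditions (P1)--(P9) and (F1)--(F3) constrains a vertex $u$ only through its own correction set $c(u)$; modifying $c(v)$ for a single $v$ therefore cannot disturb any condition attached to a different vertex, nor any focussing condition of a different vertex. Hence it suffices to show that each correction set $c(v)$ can be replaced, independently, by a set $c'(v)$ satisfying (F1)--(F3) that still satisfies (P1)--(P9) for $v$. I would build $c'(v)$ as an iterated symmetric difference $c(v)\triangle c(w_1)\triangle c(w_2)\triangle\cdots$ of $c(v)$ with correction sets of suitably chosen later vertices, exploiting the linearity of the odd-neighbourhood map over $\mathbb{F}_2$, namely $Odd(A\triangle B)=Odd(A)\triangle Odd(B)$.

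First I would show that every vertex witnessing a violation of (F1), (F2) or (F3) for $c(v)$ lies strictly above $v$ in $\prec$. A vertex $w\in c(v)$ breaking (F1) has $\lambda(w)\in\{XZ,YZ,Z\}$, so (P1) forces $v\prec w$; a vertex $w\in Odd(c(v))$ breaking (F2) has $\lambda(w)\in\{XY,X\}$, so (P2) forces $v\prec w$; and a $Y$-labelled $w$ breaking (F3) forces $v\prec w$ by the contrapositive of (P3). Given such a witness $w$ (which lies in $\overline{O}$, so $c(w)$ is defined), I would update $c(v)\mapsto c(v)\triangle c(w)$. This removes the violation: for the (F1) labels, the self-conditions (P5), (P6), (P8) give $w\in c(w)$, so $w$ leaves $c(v)$; for the (F2) labels, (P4), (P7) give $w\in Odd(c(w))$, so $w$ leaves $Odd(c(v))$; and for a $Y$-labelled $w$, (P9) says exactly one of $w\in c(w)$, $w\in Odd(c(w))$ holds, so the move flips exactly one of $w$'s two memberships and restores the (F3) equivalence.

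The heart of the argument is that this local move neither destroys the flow nor loops forever. For locality and termination: any $x\neq w$ that the move newly places into $c(v)$ or $Odd(c(v))$ as a fresh violation must lie in $c(w)$ or $Odd(c(w))$, and (P1)/(P2)/(P3) applied to $w$ then force $w\prec x$, so the new violation sits strictly above $w$. Fixing a linear extension of $\prec$ and always correcting the violation of least index, the least index carrying a violation strictly increases at each step, which bounds the number of steps. For flow-preservation I would verify (P1)--(P9) survive for $v$: the new members of $c(v)$ and $Odd(c(v))$ lie above $w\succ v$, so (P1) and (P2) persist; (P3) for $v$ is preserved because any $Y$-labelled $x$ with $\neg(v\prec x)$ also has $\neg(w\prec x)$, so (P3) for $w$ makes $x$'s two memberships flip together; and the self-conditions (P4)--(P9) are preserved, since the move can alter $v$'s own memberships only as $v\in c(v)$ when $\lambda(v)\in\{X,Y\}$ (via $v\in c(w)$, by (P1)) or $v\in Odd(c(v))$ when $\lambda(v)\in\{Y,Z\}$ (via $v\in Odd(c(w))$, by (P2)), and in each such case either no self-condition constrains that membership (label $X$ on $c$-membership, label $Z$ on $Odd$-membership) or, for $\lambda(v)=Y$, condition (P3) for $w$ forces both memberships to flip together, preserving the parity required by (P9).

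The step I expect to be the main obstacle is exactly this simultaneous bookkeeping: arguing that one move both makes measurable progress toward focussing and leaves all nine flow conditions intact, across the full case analysis over the six labels. The delicate coupling is condition (P3) for $Y$-measured vertices, which ties membership in $c$ to membership in $Odd(c)$ and so must be invoked repeatedly to control how the symmetric difference perturbs parities; getting its interaction with the planar labels $XZ$ and $YZ$ right is where the argument needs the most care. An alternative, more conceptual route I would keep in reserve is to phrase focussing as reduction of $c(v)$ modulo the $\mathbb{F}_2$-span of $\{c(w):v\prec w\}$ and to argue existence and uniqueness of a focussed representative; but the iterative version above is more self-contained and delivers a focussed Pauli flow directly.
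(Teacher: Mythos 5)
The paper does not actually prove this statement --- it imports it directly as \cite[Lemma 4.6]{simmonsRelatingMeasurementPatterns2021}, so there is no in-paper proof to compare against; your proposal must be judged on its own merits, and it holds up. It is, in essence, the standard focussing (signal-shifting) argument used in the cited reference: keep $\prec$ fixed and repair each $c(v)$ independently by symmetric differences with correction sets of vertices strictly above $v$. The key steps all check out. Conditions \ref{P1}--\ref{P3} do force every (F1)/(F2)/(F3) witness $w$ to satisfy $v \prec w$; the self-conditions \ref{P4}--\ref{P9} for $w$ guarantee that XORing with $c(w)$ removes the violation at $w$, and since the three violation types are distinguished by pairwise disjoint label sets ($\{XZ,YZ,Z\}$, $\{XY,X\}$, $\{Y\}$), $w$ cannot re-offend in a different way after the move; \ref{P1}--\ref{P3} applied to $w$ push any newly created violation strictly above $w$, which combined with a linear extension gives termination; and the label constraints coming from \ref{P1}/\ref{P2} ensure the move can alter $v$'s own memberships only in cases where the relevant self-condition is indifferent, or, for $\lambda(v)=Y$, where \ref{P3} for $w$ (applicable since $\neg(w\prec v)$ by strictness) flips both memberships together, preserving \ref{P9}. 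One phrase is imprecise but harmless: new members of $c(v)$ (resp.\ $Odd(c(v))$) need not all lie above $w$ --- only those with labels outside $\{X,Y\}$ (resp.\ outside $\{Y,Z\}$) must, and these are exactly the ones \ref{P1}/\ref{P2} constrain, so the verification goes through as you intend. Your fallback formulation --- reducing $c(v)$ modulo the $\mathbb{F}_2$-span of $\{c(w) : v \prec w\}$ --- is also essentially how the reference organizes the same computation, so either route reconstructs the cited result faithfully.
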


In the case of only Pauli measurements, the conditions from the focussed flow mean that the corrector sets can only consist of $X$ and $Y$ measured vertices and the corrected vertex, while the odd neighbourhoods of the corrector sets can only consist of the $Z$ and $Y$ measured vertices and the corrected vertex.

It is easier to search for the focussed Pauli flow, as the notion of the focussed flow in the case of Pauli bases does not require partial order as captured below. The proof follows from \cite[Lemma B.11]{simmonsRelatingMeasurementPatterns2021}.

\begin{lemma}\label{no order}
    If $(c,\prec)$ is a focussed Pauli flow for a labelled open graph $(G,I,O,\lambda)$ with only Pauli measurements, then so is $(c,\emptyset)$.
\end{lemma}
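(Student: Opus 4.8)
The plan is to verify directly that the pair $(c, \emptyset)$ satisfies all of the Pauli flow conditions (P1)--(P9) together with the focussing conditions (F1)--(F3), taking the empty relation $\emptyset$ as the partial order (which is vacuously irreflexive and transitive, hence a legitimate strict partial order). The work splits cleanly according to which conditions mention $\prec$ and which do not, so that changing $\prec$ to $\emptyset$ only affects a handful of them.

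First I would dispose of everything that does not reference the order. Conditions (P4)--(P9) are purely local constraints relating $u$ to the sets $c(u)$ and $Odd(c(u))$, and the focussing conditions (F1)--(F3) likewise depend only on $c$; none of them mentions $\prec$. Since $c$ is unchanged, all of these hold for $(c,\emptyset)$ for exactly the same reason they held for $(c,\prec)$. Thus the only real content is (P1), (P2), and (P3).

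For (P1) and (P2) I would argue that under the empty order their premises can never be met, so they hold vacuously. For (P1), with $\prec = \emptyset$ the conclusion $u \prec v$ is always false, so (P1) reduces to the claim that no $v \in c(u)$ with $v \ne u$ has $\lambda(v) \notin \{X, Y\}$. Since all measurements are Pauli, every defined label lies in $\{X, Y, Z\}$, and focussing condition (F1) forces every non-output $w \in c(u)\setminus\{u\}$ to satisfy $\lambda(w) \in \{XY, X, Y\} \cap \{X, Y, Z\} = \{X, Y\}$; hence no such $v$ exists and (P1) holds. The argument for (P2) is symmetric via (F2), which forces every non-output $w \in Odd(c(u)) \setminus \{u\}$ to have $\lambda(w) \in \{XZ, YZ, Y, Z\} \cap \{X, Y, Z\} = \{Y, Z\}$, exactly negating the (P2) premise.

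Finally, for (P3) I would note that setting $\prec = \emptyset$ makes the hypothesis $\neg(u \prec v)$ always true, so (P3) collapses to: for every $v \in \overline{O}\setminus\{u\}$ with $\lambda(v) = Y$ we have $v \in c(u) \Leftrightarrow v \in Odd(c(u))$, which is verbatim (F3). The one point needing care --- the mild obstacle in the write-up --- is that $c(u)$ and $Odd(c(u))$ live in $\overline{I}$ and may contain outputs, which carry no label, whereas (F1)--(F3) quantify only over non-outputs. I would resolve this by observing that in the original flow $(c,\prec)$ the relation $u \prec v$ is impossible for $v \in O$ (the order lives on $\overline{O}$), so an output $v$ could never have triggered the premises of (P1) or (P2) in $(c,\prec)$ either; the label premises therefore effectively range only over non-outputs, making the correspondence with (F1) and (F2) exact and the whole verification complete.
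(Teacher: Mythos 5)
Your proposal is correct, but it takes a genuinely different route from the paper, because the paper gives no verification at all: its entire ``proof'' is a citation to Lemma B.11 of Simmons and Kissinger, and the content is left to that reference. Your argument is a self-contained reconstruction of exactly what that citation encapsulates. The decomposition is the right one: (P4)--(P9) and (F1)--(F3) never mention $\prec$, so they transfer unchanged; with Pauli-only labels, (F1) and (F2) exactly negate the premises of (P1) and (P2), so both hold vacuously under the empty order; and with $\neg(u \prec v)$ always true, (P3) collapses verbatim to (F3). Your treatment of the one subtle point --- that $c(u)$ and $Odd(c(u))$ live in $\overline{I}$ and may contain unlabelled output vertices --- is also sound: since $\prec$ is an order on $\overline{O}$, an output $v$ can never satisfy the conclusion $u \prec v$, so the validity of the original flow $(c,\prec)$ already forces the premises of (P1) and (P2) to fail for outputs, and since those premises do not involve the order, this failure is inherited by $(c,\emptyset)$. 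What the paper's citation buys is brevity; what your proof buys is self-containedness and transparency of the mechanism, making explicit that the focussing conditions are precisely what render the partial order dispensable in the Pauli-only case.
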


Further, we can improve theorem \ref{red to x y z} to only look for $X$ and $Z$ labelling, getting rid of the $Y$ measurements:

\begin{theorem}\label{red to x z}
    Suppose that a labelled open graph $(G,I,O,\lambda)$ has Pauli flow. Then, there exists $\lambda' \colon \overline{O} \to \{ X, Z \}$ such that $(G,I,O,\lambda')$ has Pauli flow.
\end{theorem}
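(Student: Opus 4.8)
The plan is to eliminate the $Y$ labels left by Theorem \ref{red to x y z}, converting each into an $X$ or a $Z$ label. First I would apply Theorem \ref{red to x y z} to assume $\lambda\colon\overline O\to\{X,Y,Z\}$, then Theorem \ref{foc flow exists} to pass to a focussed Pauli flow $(c,\prec)$, and finally Lemma \ref{no order} to replace $\prec$ by the empty order. With Pauli-only labels and the empty order the axioms simplify: (P4)--(P6) and (P9) become vacuous once no label is planar or $Y$, (P1) forces every correction set to contain no non-self $Z$-vertex, and (P2) forces every odd neighbourhood to contain no non-self $X$-vertex, the self-conditions (P7)--(P8) being the only remaining constraints. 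So my target is a labelling $\lambda'\colon\overline O\to\{X,Z\}$ and a correction function $c'$ such that, for every $v$, the set $c'(v)\setminus\{v\}$ contains no $Z$-vertex, $Odd(c'(v))\setminus\{v\}$ contains no $X$-vertex, each $X$-vertex $u$ satisfies $u\in Odd(c'(u))$, and each $Z$-vertex $u$ satisfies $u\in c'(u)$; then $(c',\emptyset)$ is a Pauli flow.

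Next I would fix, for each $Y$-vertex $u$, its new label using (P9): exactly one of $u\in c(u)$ and $u\in Odd(c(u))$ holds, so I relabel $u$ to $Z$ in the first case (satisfying P8) and to $X$ in the second (satisfying P7). The difficulty is the cross-conditions. By (F3), whenever a $Y$-vertex $u$ lies in $c(v)$ for some $v\neq u$ it also lies in $Odd(c(v))$, and conversely; hence relabelling $u$ to $Z$ breaks (P1) at every such $v$ (a $Z$-vertex now sits in $c(v)$), while relabelling it to $X$ breaks (P2) (an $X$-vertex now sits in $Odd(c(v))$). So I cannot merely relabel: I must also repair the correction sets. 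The natural repair is to XOR the offending vertex's own correction set $c(u)$ into $c(v)$; since $u$ occupies exactly one of the two roles in $c(u)$, adding $c(u)$ to $c(v)$ deletes $u$ from its forbidden role at $v$ without disturbing the other role.

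The engine making these repairs coherent is a single structural identity. Writing $A_{uw}=[w\in c(u)]$ and $B_{uw}=[w\in Odd(c(u))]$ for $Y$-labelled $u,w$, condition (F3) gives $A_{uw}=B_{uw}$ off the diagonal and (P9) gives $A_{uu}\oplus B_{uu}=1$ on it, whence $A\oplus B=I$ over $\mathbb F_2$ on the block indexed by the $Y$-labelled vertices. I would use this to recast all the repairs as a single linear problem: replace each $c'(v)$ by $c(v)$ plus an $\mathbb F_2$-combination of the correction sets $c(u)$ of the $Y$-labelled vertices, chosen so that every $Y$-vertex is simultaneously purged from the role it is no longer allowed to occupy while its own self-membership is preserved. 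The identity $A\oplus B=I$ is precisely what certifies that the $Y$-vertices' own correction sets supply the pivots this elimination needs, each contributing a $1$ on the diagonal in the relevant role.

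The step I expect to be the main obstacle is exactly this simultaneous elimination: clearing one $Y$-vertex from another's correction set can reintroduce a third $Y$-vertex or destroy a self-membership required by (P7)/(P8), and in ``mutual'' configurations (two $Y$-vertices each lying in the other's correction set) the naive pivot matrix is singular. Resolving this is where the relation $A\oplus B=I$ must be exploited more carefully --- to show the relevant $\mathbb F_2$-system is always solvable, if necessary by enlarging the pool of pivots beyond the $Y$-rows or by adjusting the type assignment so the pivot matrix becomes invertible. Once $\lambda'$ and $c'$ are produced this way, verifying that $(\lambda',c',\emptyset)$ meets all of (P1)--(P9) is a routine check using the focussing of the original flow.
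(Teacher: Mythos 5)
Your opening moves (Theorem \ref{red to x y z}, then Theorem \ref{foc flow exists} and Lemma \ref{no order}, then the (P9)-based $X$/$Z$ verdict for each $Y$-vertex) coincide with the paper's, but the strategy then diverges in a way that genuinely fails, and the obstacle you flag at the end is not a technicality that a cleverer use of $A\oplus B=I$ can fix. The problem is that relabelling \emph{all} $Y$-vertices at once according to \ref{P9} can produce a labelling that admits no Pauli flow whatsoever, so no repair of correction sets --- by your XOR ansatz or by any other means --- can succeed. Concretely, take the $4$-cycle $u-a-w-b-u$ with $I=\emptyset$, $O=\{a,b\}$ and $\lambda(u)=\lambda(w)=Y$. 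Then $c(u)=\{w,a\}$, $c(w)=\{u,b\}$ with the empty order is a focussed Pauli flow: one checks $Odd(c(u))=Odd(c(w))=\{u,w,a,b\}$, so \ref{P9} and \ref{F1}--\ref{F3} all hold. Since $u\in Odd(c(u))\setminus c(u)$ and $w\in Odd(c(w))\setminus c(w)$, your rule relabels \emph{both} to $X$; but the reduced adjacency matrix of this open graph has two identical rows $(0\;0\;1\;1)$, hence is not right-invertible, so by Corollary \ref{red adj matrix invertibility} and Theorem \ref{foc flow exists} the all-$X$ labelling has no Pauli flow at all. This is precisely your ``mutual'' configuration ($w\in c(u)$ and $u\in c(w)$), where the pivot matrix is the singular $\left(\begin{smallmatrix}1&1\\1&1\end{smallmatrix}\right)$; the example shows the failure lies in the simultaneous type assignment itself, not merely in the solvability of the repair system, and your fallback of ``adjusting the type assignment'' is left unresolved even though it is the entire difficulty.

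The idea you are missing is that you need not --- and should not --- demand the empty order for the new labelling, nor relabel all $Y$-vertices in one shot. Conditions \ref{P1} and \ref{P2} do not forbid a $Z$-vertex in a correction set or an $X$-vertex in an odd neighbourhood; they only require such vertices to come \emph{later} in $\prec$. The paper therefore relabels a \emph{single} $Y$-vertex $u$ (choosing $X$ or $Z$ by \ref{P9}), keeps the correction function completely unchanged, and declares $u$ maximal in the new partial order: every violation you were trying to engineer away is then licensed by $v\prec u$, so no correction set is ever modified. Only then does it re-focus (Theorem \ref{foc flow exists} plus Lemma \ref{no order}) before touching the next $Y$-vertex. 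This re-focussing step is exactly what your one-shot scheme cannot emulate: it replaces $c$, and with it the \ref{P9} verdicts of the remaining $Y$-vertices. In the example above, after relabelling $u\mapsto X$ and re-focussing, the new correction set of $w$ is $\{u,w,a,b\}$ with $Odd(\{u,w,a,b\})=\emptyset$, so \ref{P9} now sends $w$ to $Z$, producing the labelling $(X,Z)$ --- which does have flow --- instead of $(X,X)$, which does not.
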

\begin{proof}
    Let $(G,I,O,\lambda)$ have Pauli flow. By theorem \ref{red to x y z}, there is $\lambda_1 \colon \overline{O} \to \{ X, Y, Z \}$ such that $(G,I,O,\lambda_1)$ has Pauli flow. By theorem \ref{foc flow exists} and lemma \ref{no order}, $(G,I,O,\lambda_1)$ has some focussed Pauli flow $(c,\emptyset)$. If there is no $u$ with $\lambda_1(u) = Y$, the thesis follows. Otherwise, consider such $u$. Define $\lambda_2 \colon \overline{O} \to \{ X, Y, Z \}$ as follows. $\lambda_2(v) = \lambda_1(v)$ for $v \ne u$. By \ref{P9}, $u \in p(u) \oplus u \in Odd(p(u))$. If $u \in p(u)$, we set $\lambda_2(u) = Z$ and when $u \in Odd(p(u))$, we set $\lambda_2(u) = X$. Then $(G,I,O,\lambda_2)$ has Pauli flow $(c,\prec)$, where $c$ is the same correction function as for $(G,I,O,\lambda_1)$ and $\prec$ is given by $v \prec u$ for all $v \in \overline{O}\setminus \{ u \}$, i.e.\ $u$ is the last vertex in this order. We verify $(c,\prec)$ indeed is the Pauli flow. Conditions \ref{P4}, \ref{P5}, \ref{P6} hold automatically. Similarly, \ref{P7}, \ref{P8}, \ref{P9} hold for vertices other than $u$. The choice for the new label of $u$ is done by using \ref{P9} for the old flow, thus ensuring that $c$ works for correction of $u$ with new label. Hence, only \ref{P1}, \ref{P2} and \ref{P3} remain. By construction, $u$ is last in the order, hence if $u \in c(v)$ or $u \in Odd(c(v))$, the necessary order condition is guaranteed to hold. When $v \in c(u)$ or $v \in Odd(c(u))$, then the $\lambda_2(v)=\lambda_1(v)$ ensures that no new order requirement appears. Similarly, $\emptyset$ order worked for other pairs of vertices. Thus, \ref{P1}, \ref{P2} and \ref{P3} all hold. By focussing the flow and relabelling one vertex labelled with $Y$ at a time, we can get rid of all $Y$ measured vertices, ending the proof.
\end{proof}

Thanks to the above lemmata, instead of looking for $\lambda$ resulting in Pauli flow $(c,\prec)$, we can just look for $\lambda$ into Pauli measurements resulting in a focussed Pauli flow $(c,\emptyset)$.

Finally, we can omit cases with $I \cap O \ne \emptyset$, by reducing those to have $I \cap O = \emptyset$.

\begin{theorem}\label{no i cap o}
    Let $(G,I,O,\lambda)$ be a labelled open graph. Then $(G,I,O,\lambda)$ has Pauli flow if and only if $(G',I\setminus O, O\setminus I,\lambda)$ does, where $G'$ is $G$ with vertices in $I \cap O$ removed.
\end{theorem}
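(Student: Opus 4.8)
The plan is to prove the stronger, pointwise statement that a single pair $(c,\prec)$ is a Pauli flow for $(G,I,O,\lambda)$ if and only if it is a Pauli flow for $(G',I\setminus O,O\setminus I,\lambda)$; the claimed equivalence of existence is then immediate. The idea is that the deleted vertices are invisible to every flow condition, so nothing needs to be transported. Write $W := I\cap O$ for the removed vertices and $V' := V\setminus W$ for the vertex set of $G'$.

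First I would record the set identities that make the data shared between the two graphs. Since $W\subseteq I$ and $W\subseteq O$, we have $W\cap\overline{O}=\emptyset$ and $W\cap\overline{I}=\emptyset$, so $\overline{O},\overline{I}\subseteq V'$. Using the disjoint decompositions $O=(I\cap O)\sqcup(O\setminus I)$ and $I=(I\cap O)\sqcup(I\setminus O)$, a direct computation shows that the non-outputs and non-inputs of $(G',I\setminus O,O\setminus I)$ are again exactly $\overline{O}$ and $\overline{I}$. Hence $\lambda$ is a valid labelling for both graphs (the input constraint is imposed on the same set $I\setminus O$ in each), every candidate correction function $c\colon\overline{O}\to\mathcal P(\overline{I})$ is admissible for both, and every strict partial order $\prec$ on $\overline{O}$ serves for both. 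In particular the objects $(\overline{O},\overline{I},\lambda,c,\prec)$ are literally identical in the two settings.

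Next I would compare odd neighbourhoods, the only graph-dependent ingredient. For any $u\in\overline{O}$ we have $c(u)\subseteq\overline{I}\subseteq V'$, so for every $v\in V'$ neither $v$ nor any vertex of $c(u)$ lies in $W$; deleting $W$ therefore removes no edge joining $v$ to $c(u)$, and the parity of $|\{a\in c(u): va\in E\}|$ is unchanged. Thus $v\in Odd_G(c(u))\Leftrightarrow v\in Odd_{G'}(c(u))$ for all $v\in V'$, and the two odd neighbourhoods can differ only on $W\subseteq O$. Membership $v\in c(u)$ is plainly identical in both graphs, and no vertex of $W$ ever belongs to any $c(u)$ since $W\subseteq I$ is disjoint from $\overline{I}$. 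With these facts I would check the conditions \ref{P1}--\ref{P9} for each $u\in\overline{O}$: conditions \ref{P4}--\ref{P9} refer only to membership of $u\in\overline{O}\subseteq V'$ in $c(u)$ and $Odd(c(u))$, which agree in $G$ and $G'$; condition \ref{P3} quantifies over $v\in\overline{O}\subseteq V'$, where both memberships agree; and condition \ref{P1} evaluates identically because $c(u)$, $\lambda$ and $\prec$ are unchanged. The only genuinely new feature in $G$ is that $Odd_G(c(u))$ may contain vertices of $W$, which is relevant only to \ref{P2}; but each such vertex is an output and is therefore never measured, so it imposes no timing requirement (equivalently, under the convention that outputs are maximal in $\prec$, the consequent $u\prec v$ holds vacuously). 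Restricted to the measured vertices $\overline{O}$, condition \ref{P2} thus reads identically in both graphs, and $(c,\prec)$ satisfies all nine conditions for $G$ iff it does for $G'$.

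I expect the one point needing care to be the bookkeeping around output vertices in \ref{P1} and \ref{P2}: one must be explicit that $\lambda$ is undefined on outputs and that these conditions impose no ordering constraint from an output vertex, for otherwise the appearance of $W$-vertices in $Odd_G(c(u))$ would appear to matter. Everything else is a routine verification that deleting $W$ preserves the quadruple $(\overline{O},\overline{I},\lambda,c)$, the order $\prec$, and the parities defining odd neighbourhoods on $V'$.
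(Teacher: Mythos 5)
Your proof is correct and takes essentially the same route as the paper's: the paper's (much briefer) argument likewise observes that a vertex of $I\cap O$ can neither appear in any correction set (being an input) nor be measured or ordered (being an output), and hence is invisible to all nine flow conditions, so the same pair $(c,\prec)$ works for both open graphs. The one point you elaborate beyond the paper --- that vertices of $I\cap O$ may still appear in $Odd_G(c(u))$ and must be seen to impose no constraint in \ref{P2} --- is exactly what the paper leaves implicit under the convention that unmeasured vertices carry no ordering requirement, so your write-up is a more careful rendering of the same argument rather than a different one.
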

\begin{proof}
    Suppose $v \in I \cap O$. A correction function $c$ in Pauli flow has codomain $\mathcal{P}(\overline{I})$ and $v \in I$ so $v$ cannot be used in any correction set. Further, as $v \in O$, $v$ is not measured and so $c(v)$ is undefined and the partial order does not operate on $v$. Therefore $v$ does not impact any of the nine flow conditions in any way.
\end{proof}

Combining the above simplifications, we can restrict $\FlowSearch$ problem to cases $(G,I,O)$ with $I \cap O = \emptyset$, where we are looking for $\lambda \colon \overline{O} \to \{ X, Z \}$ such that $(G,I,O,\lambda)$ has focussed Pauli flow.

Since the removal of $Z$-measured vertices preserves Pauli flow (which we mention later), it is also possible to view the above problem as follows: given $(G,I,O)$, is there an induced open subgraph with the same set of inputs and outputs, that has Pauli flow with only $X$ labels?

\subsection{Algebraic interpretation of flow}
Unless specified otherwise, from now on, we only consider $(G,I,O)$ with $I \cap O = \emptyset$. The key construction used to translate the $\FlowSearch$ problem into a linear algebra problem is the reduced adjacency matrix \cite{mhallaWhichGraphStates2014a} (note, that in that paper the word \textit{induced} is used in place of \textit{reduced}).

\begin{definition}
    Let $(G,I,O)$ be an open graph. We define the \textbf{adjacency matrix} $A_G$ over $\mathbb{F}_2$ as a $|V| \times |V|$ matrix with ${\left(A_{G}\right)_{u,v}} = 1$ when $uv \in E$ and $0$ otherwise. The \textbf{reduced adjacency matrix} ${A_G \mid}_{\overline{I}}^{\overline{O}}$ is the $|\overline{O}| \times |\overline{I}|$ minor of the adjacency matrix of $G$. The minor is obtained by removing the outputs' rows and the inputs' columns.
\end{definition}

The key property of the reduced adjacency matrix linking it to the Pauli flow is right-invertibility. In \cite{mhallaWhichGraphStates2014a}, a version lining $XY$ measurements only to the right-invertibility was established. Subsequently, this theorem was extended by Miriam Backens to work for both $X$ and $XY$ measurements:

\begin{theorem}\label{red adj matrix invertibility with DAG}
    Let $\mathcal{G} = (G,I,O,\lambda)$ be a labelled open graph with $\lambda(v) \in \{ X, XY \}$ for all $v \in \overline{O}$. Then $\mathcal{G}$ has focussed Pauli flow if and only if there exists a directed graph $F = (V,E_F)$ satisfying the following two properties:\begin{itemize}[noitemsep]
        \item Let $E_F' = \{(u,v) \in E_F \mid \lambda(u) = XY \}$, then the subgraph $F' = (V,E_F')$ is acyclic,
        \item $A_G\mid_{\overline{I}}^{\overline{O}} \cdot A_F\mid_{\overline{O}}^{\overline{I}} = Id_{\overline{O}}$, where $A_F\mid_{\overline{O}}^{\overline{I}}$ is the $\overline{I} \times \overline{O}$ minor of $A_F$ obtained by removing inputs' rows and outputs' columns.
    \end{itemize}
    Further, the columns of $A_F\mid_{\overline{O}}^{\overline{I}}$ encode the correction sets of the vertices in $\overline{O}$.
\end{theorem}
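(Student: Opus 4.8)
The statement is an if-and-only-if about existence, and my whole plan rests on a single dictionary between the directed graph $F$ and the pair $(c,\prec)$. I would set this up first: for $u\in\overline{O}$ I read column $u$ of $M:=A_F\mid_{\overline{O}}^{\overline{I}}$ as the indicator vector of a correction set, declaring $v\in c(u)\Leftrightarrow (v,u)\in E_F$ for $v\in\overline{I}$, and conversely building $F$ from a given $c$ by taking exactly these edges and no edges touching inputs or outputs. Under this identification the entries of $A_F$ outside the $\overline{I}\times\overline{O}$ block are irrelevant to the matrix equation, so the existence of a suitable $F$ is the same as the existence of a suitable $c$ together with a compatible order; the final ``columns encode correction sets'' clause is then just a restatement of the dictionary.

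The computational core is to expand the product. For $w,u\in\overline{O}$,
\[
\left( A_G\mid_{\overline{I}}^{\overline{O}} \cdot M \right)_{w,u} = \sum_{v \in \overline{I}} (A_G)_{w,v}\,[v \in c(u)] = \#\{ v \in c(u) : wv \in E\} \bmod 2 = [\,w \in Odd(c(u))\,],
\]
using that $c(u)\subseteq\overline{I}$ so the restricted row of the adjacency matrix sees all of $c(u)$. Hence $A_G\mid_{\overline{I}}^{\overline{O}}\cdot M = Id_{\overline{O}}$ is literally the assertion that $u\in Odd(c(u))$ for every $u\in\overline{O}$ (the diagonal) and $w\notin Odd(c(u))$ for all distinct $w,u\in\overline{O}$ (the off-diagonal).

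The next step is to match these two conditions to the flow axioms under the hypothesis $\lambda(\overline{O})\subseteq\{X,XY\}$. The diagonal condition is exactly \ref{P7} for $X$-labelled vertices and the second clause of \ref{P4} for $XY$-labelled ones. The off-diagonal condition is exactly the focussing axiom \ref{F2}: since no label lies in $\{XZ,YZ,Y,Z\}$, axiom \ref{F2} forces $(\overline{O}\setminus\{u\})\cap Odd(c(u))=\emptyset$, and conversely. I would then observe that the remaining axioms are free in this regime: \ref{F1} holds because every label lies in $\{XY,X,Y\}$, while \ref{F3} and \ref{P3}, \ref{P5}, \ref{P6}, \ref{P8}, \ref{P9} are vacuous since no vertex is labelled $Y$, $XZ$, $YZ$, or $Z$; and \ref{P2} is vacuous precisely because the off-diagonal (equivalently \ref{F2}) leaves no $v\in\overline{O}\setminus\{u\}$ in $Odd(c(u))$.

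What remains, and where acyclicity enters, are the first clause of \ref{P4} ($u\notin c(u)$ for $XY$ vertices) and the order axiom \ref{P1}, both captured by $F'$. A self-loop $(u,u)$ with $\lambda(u)=XY$ lies in $E_F'$, so acyclicity of $F'$ is equivalent to the absence of such self-loops, i.e.\ to $u\notin c(u)$ for all $XY$-labelled $u$, supplying the missing clause of \ref{P4}. For \ref{P1}, an edge $(a,b)\in E_F'$ means $a\in c(b)$ with $\lambda(a)=XY$, and \ref{P1} demands $b\prec a$ whenever $a\neq b$; thus \ref{P1} asks for a strict partial order on $\overline{O}$ reverse-compatible with the edges of $F'$, which exists iff $F'$ has no directed cycle (forward: the given $\prec$ forbids cycles; backward: take any linear extension of the reverse of $F'$). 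I expect this passage between ``acyclic $F'$'' and ``strict partial order $\prec$'' to be the only genuinely delicate point: one must keep the edge orientations straight, read the self-loops as the encoding of \ref{P4}, and discharge the few order obligations involving outputs by taking outputs to be $\prec$-maximal. Everything else is the bookkeeping that, restricted to $\{X,XY\}$, the nine flow axioms and three focussing axioms collapse to exactly the diagonal, off-diagonal, and acyclicity data.
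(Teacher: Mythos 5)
Your proposal is correct and follows essentially the same route as the paper's proof: the same dictionary $v \in c(u) \Leftrightarrow (v,u) \in E_F$, the same expansion of the matrix product identifying the diagonal entries with (P4)/(P7) and the off-diagonal entries with (F2), and the same correspondence between acyclicity of $F'$ and the existence of the strict partial order demanded by (P1), with self-loops in $F'$ encoding the clause $u \notin c(u)$ for $XY$-labelled vertices. If anything, your explicit handling of that self-loop/first-clause-of-(P4) point is slightly more careful than the paper's own write-up, which asserts it only implicitly.
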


\iffalse
The proof has not yet been published. We include it in the appendix \ref{red adj matrix invertibility with DAG proof} for clarity. Note, that in the case of $X$ measurements only, the graph $F'$ in the theorem \ref{red adj matrix invertibility with DAG} is empty and hence it is automatically acyclic. Hence, the condition simplifies to just the existence of the right inverse.
\fi

The proof has not yet been published \cite{mitosekbackensupcoming}. Note, that in the case of $X$ measurements only, the graph $F'$ in the theorem \ref{red adj matrix invertibility with DAG} is empty and hence it is automatically acyclic. Hence, the condition simplifies to just the existence of the right inverse.

\begin{corollary}
    \label{red adj matrix invertibility}
    Let $\mathcal{G} = (G,I,O,\lambda)$ be a labelled open graph with $\lambda(v) = X$ for all $v \in \overline{O}$. Then $\mathcal{G}$ has focussed Pauli flow if and only if ${A_G \mid}_{\overline{I}}^{\overline{O}}$ is right-invertible over $\mathbb{F}_2$. Further, the columns of a potential right inverse of $A_G\mid_{\overline{I}}^{\overline{O}}$ encode the correction sets of the vertices in $\overline{O}$.
\end{corollary}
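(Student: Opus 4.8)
The plan is to obtain the corollary as a direct specialization of Theorem \ref{red adj matrix invertibility with DAG} to the case in which every non-output carries the label $X$. First I would apply that theorem to $\mathcal{G}$: focussed Pauli flow exists if and only if there is a directed graph $F = (V,E_F)$ whose $XY$-subgraph $F' = (V,E_F')$ is acyclic and whose minor satisfies $A_G\mid_{\overline{I}}^{\overline{O}} \cdot A_F\mid_{\overline{O}}^{\overline{I}} = Id_{\overline{O}}$. Since the hypothesis of the corollary, $\lambda(v) = X$ for all $v \in \overline{O}$, is a special case of the hypothesis $\lambda(v) \in \{X, XY\}$, the theorem applies verbatim.

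The key observation is that the acyclicity condition becomes vacuous. Because no vertex is labelled $XY$, the edge set $E_F' = \{(u,v) \in E_F \mid \lambda(u) = XY\}$ is empty, so $F' = (V,\emptyset)$ contains no cycles and is trivially acyclic for every choice of $F$. The theorem therefore collapses to the statement that $\mathcal{G}$ has focussed Pauli flow if and only if there exists a directed graph $F$ with $A_G\mid_{\overline{I}}^{\overline{O}} \cdot A_F\mid_{\overline{O}}^{\overline{I}} = Id_{\overline{O}}$.

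It then remains to identify this surviving condition with right-invertibility of $A_G\mid_{\overline{I}}^{\overline{O}}$ over $\mathbb{F}_2$. One direction is immediate: the $|\overline{I}| \times |\overline{O}|$ minor $A_F\mid_{\overline{O}}^{\overline{I}}$ is exactly a right inverse. For the converse I would show that any right inverse $B$ is realizable as such a minor, by taking $F$ to be the directed graph whose arc from $u \in \overline{I}$ to $v \in \overline{O}$ is present precisely when $B_{u,v} = 1$. The mild point requiring care here — and the only place I expect any obstacle at all — is confirming that these entries of a directed adjacency matrix may be assigned independently: this holds because $F$ is directed (so no symmetry relates positions $(u,v)$ and $(v,u)$), and because the ``diagonal'' positions $u = v \in B$ correspond to self-loops, which Pauli flow permits since no condition constrains whether $u \in c(u)$ for an $X$-labelled $u$. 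Finally, the claim about correction sets transfers directly from Theorem \ref{red adj matrix invertibility with DAG}, whose columns of $A_F\mid_{\overline{O}}^{\overline{I}}$ already encode them, so the columns of the right inverse encode the correction sets of the vertices in $\overline{O}$.
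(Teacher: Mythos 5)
Your proposal is correct and follows essentially the same route as the paper, which obtains the corollary from Theorem \ref{red adj matrix invertibility with DAG} by observing that with only $X$ labels the subgraph $F'$ is empty, hence trivially acyclic, so the condition collapses to right-invertibility. Your extra care in checking that an arbitrary right inverse can be realized as $A_F\mid_{\overline{O}}^{\overline{I}}$ for a directed graph $F$ (entries assignable independently, self-loops permitted) is sound and merely makes explicit what the paper leaves as a one-line remark.
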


See figure \ref{fig:ex233 no flow mat} for an example of using corollary \ref{red adj matrix invertibility} and figure \ref{fig:ex x xy full} in appendix \ref{red adj matrix invertibility with DAG proof} for an example of using theorem \ref{red adj matrix invertibility with DAG}.

The $Z$ labelled vertices can be removed and introduced without affecting flow existence, as captured by the following lemmata. In these, $G[A]$ stands for the subgraph of $G$ induced by vertices in $A$.

\begin{lemma}[Removal of $Z$ measured vertex {\cite[Lemma D.6]{simmonsRelatingMeasurementPatterns2021}}]\label{Z removal}
    Let $(G,I,O,\lambda)$ be a labelled open graph with Pauli flow and with $\lambda(v) = Z$ for some $v \in \overline{O}$. Then $v$ can be removed without affecting flow existence. In other words, $(G[V \setminus \{ v \}],I,O,\lambda|_{\overline{O} \setminus \{v \}})$ has Pauli flow.
\end{lemma}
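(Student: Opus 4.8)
The plan is to keep the partial order essentially unchanged and to repair the correction function so that it never mentions $v$. First note that $v$ is an internal vertex: since $\lambda(v)=Z$ and a measurement labelling forces $\lambda(w)\in\{X,XY,Y\}$ for every $w\in I\setminus O$, we have $v\notin I$, and $v\in\overline{O}$ gives $v\notin O$; hence deleting $v$ leaves $I$ and $O$ (and therefore $\overline{I},\overline{O}$ up to the single point $v$) unchanged. Fix a Pauli flow $(c,\prec)$ on $(G,I,O,\lambda)$. I will define a flow $(c',\prec')$ on $G'=G[V\setminus\{v\}]$ by taking $\prec'$ to be the restriction of $\prec$ to $\overline{O}\setminus\{v\}$ and
\[
c'(u)=\begin{cases} c(u)\,\triangle\,c(v) & \text{if } v\in c(u),\\ c(u) & \text{if } v\notin c(u),\end{cases}
\]
for $u\in\overline{O}\setminus\{v\}$. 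Since $v\in c(v)$ by \ref{P8}, in the first case $v$ is cancelled, so $v\notin c'(u)$ in both cases and $c'(u)\subseteq\overline{I}\setminus\{v\}$, as required.

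Two bookkeeping facts drive everything. First, $Odd$ is $\mathbb{F}_2$-linear and, for any $A\subseteq V\setminus\{v\}$, deleting $v$ only removes $v$ from its odd neighbourhood, so $Odd_{G'}(A)=Odd_G(A)\setminus\{v\}$ and $Odd_{G'}(c(u)\triangle c(v))=(Odd_G(c(u))\triangle Odd_G(c(v)))\setminus\{v\}$. Second, the crucial order fact: if $v\in c(u)$ then \ref{P1} applied to $u$ (with $\lambda(v)=Z\notin\{X,Y\}$) forces $u\prec v$.

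Verifying that $(c',\prec')$ is a Pauli flow splits into the trivial case $v\notin c(u)$, where $c'(u)=c(u)$ and only a stray $v$ is stripped from the odd neighbourhood, and the substantive case $v\in c(u)$. In the latter, the order conditions \ref{P1} and \ref{P2} follow by transitivity: a witness $w$ lying in $c(v)$ (resp.\ $Odd_G(c(v))$) but not in $c(u)$ (resp.\ $Odd_G(c(u))$) satisfies $v\prec w$ by \ref{P1} (resp.\ \ref{P2}) applied to $v$, and $u\prec v$ then gives $u\prec w$; condition \ref{P3} is similar, using that $\neg(u\prec w)$ together with $u\prec v$ rules out $v\prec w$, so \ref{P3} for $v$ applies and the two contributions $(w\in c(\cdot))=(w\in Odd_G(c(\cdot)))$ combine correctly under the symmetric difference.

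The main obstacle is the self-referential conditions \ref{P4}--\ref{P9} for $u$ in the case $v\in c(u)$, since passing from $c(u)$ to $c(u)\triangle c(v)$ could in principle flip whether $u$ lies in its own correction set or odd neighbourhood. Here I would use antisymmetry of $\prec$: because $u\prec v$, neither $v\prec u$ nor $u=v$ holds, so \ref{P1} and \ref{P2} applied to $v$ give $u\notin c(v)$ whenever $\lambda(u)\notin\{X,Y\}$ and $u\notin Odd_G(c(v))$ whenever $\lambda(u)\notin\{Y,Z\}$. Checking the labels one by one, this is exactly enough to guarantee that the XOR with $c(v)$ and $Odd_G(c(v))$ does not change the membership of $u$ relevant to \ref{P4}--\ref{P8}. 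The single remaining case $\lambda(u)=Y$ (condition \ref{P9}), where neither exclusion is available, is rescued by \ref{P3} applied to $v$: since $\neg(v\prec u)$, it yields $u\in c(v)\Leftrightarrow u\in Odd_G(c(v))$, so the two XOR contributions cancel and \ref{P9} is preserved. This completes the verification.
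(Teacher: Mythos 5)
Your proof is correct, and it is worth noting that the paper itself does not prove this lemma at all: it is imported by citation (\cite[Lemma D.6]{simmonsRelatingMeasurementPatterns2021}, noted there to be a reformulation of \cite[Lemma 4.7]{backensThereBackAgain2021}), so there is no in-paper argument to compare against. Your reconstruction is a complete, self-contained proof and follows the standard technique used in that cited literature: repair the correction function by $c'(u)=c(u)\triangle c(v)$ when $v\in c(u)$ (with \ref{P8} guaranteeing the cancellation of $v$), restrict the order, and check the nine conditions. All the delicate points are handled correctly: the preliminary observation that $\lambda(v)=Z$ forces $v\notin I$, so deletion leaves $I$ and $O$ intact; the order fact $u\prec v$ from \ref{P1} applied to $u$, which makes \ref{P1}--\ref{P2} for the repaired sets follow by transitivity through $v$; the use of asymmetry of $\prec$ to invoke \ref{P1}/\ref{P2} for $v$ in contrapositive form, which is exactly what protects the self-membership conditions \ref{P4}--\ref{P8} (noting that for \ref{P7} and \ref{P8} only one of the two memberships is constrained, so the possibly-flipped one is harmless); and the rescue of the $\lambda(u)=Y$ case \ref{P9}, as well as condition \ref{P3}, via \ref{P3} applied to $v$, where $\neg(v\prec u)$ and $\neg(v\prec w)$ follow from $u\prec v$ and transitivity. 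The identity $Odd_{G'}(A)=Odd_G(A)\setminus\{v\}$ for $A\subseteq V\setminus\{v\}$, combined with $\mathbb{F}_2$-linearity of $Odd$, is also stated and used correctly.
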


\begin{lemma}[Introduction of $Z$ measured vertex {\cite[Proposition 4.1]{mcelvanneyCompleteFlowPreservingRewrite2023}}]\label{Z intro}
    Let $(G,I,O,\lambda)$ have Pauli flow. Then a new $Z$ measured vertex $v \notin V$ can be added to $G$, with any edges from $v$, without affecting flow existence. In other words, any labelled open graph $(G',I,O,\lambda')$ has Pauli flow, where:
    \begin{gather*}
        V(G') = V \cup \{ v \} {\hskip 0.15\textwidth} G'[V] = G {\hskip 0.15\textwidth} \lambda'|_{V \setminus O} = \lambda {\hskip 0.15\textwidth} \lambda'(v) = Z
    \end{gather*}
\end{lemma}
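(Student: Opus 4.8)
The plan is to construct a Pauli flow $(c',\prec')$ for $(G',I,O,\lambda')$ directly from a fixed Pauli flow $(c,\prec)$ of $(G,I,O,\lambda)$, leaving the old data essentially untouched and self-correcting the new vertex. The one structural fact that drives everything is that every edge of $G'$ not already present in $G$ is incident to $v$: hence for any set $A\subseteq V$ and any old vertex $w\neq v$, the $A$-neighbours of $w$ agree in $G$ and in $G'$ (since $v\notin A$), so $w\in Odd_{G'}(A)\Leftrightarrow w\in Odd_{G}(A)$. In other words, attaching $v$ can alter the odd neighbourhood of an old correction set only by possibly inserting $v$ itself, never by flipping the status of an old vertex. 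I would also note at the outset that $v\notin V\supseteq I\cup O$, so $v\in\overline{I}'\cap\overline{O}'$ and $v$ is a measured internal vertex of $G'$, making the constructions below well-typed.

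Next I would define $c'(u)=c(u)$ for every old $u\in\overline{O}$ (a legitimate subset of $\overline{I}\subseteq\overline{I}'$) and $c'(v)=\{v\}$, and set $\prec'$ to be $\prec$ together with $v\prec'w$ for all $w\in\overline{O}$, so that $v$ becomes a new global minimum. I would then check that $\prec'$ is a strict partial order: irreflexivity and transitivity are immediate, and since no relation of the form $w\prec'v$ is ever added, no cycle through $v$ can arise.

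The verification then splits into the conditions for $v$ and those for the old vertices. For $v$, the only applicable label condition is \ref{P8}, which holds because $v\in\{v\}=c'(v)$; condition \ref{P1} for $u=v$ is vacuous, \ref{P3} for $u=v$ is vacuous since $v$ is minimal (so $\neg(v\prec'w)$ fails for every $w\neq v$), and \ref{P2} for $u=v$ requires only that $v\prec'w$ for each neighbour $w$ of $v$ with $\lambda'(w)\notin\{Y,Z\}$, which again holds by minimality. For an old $u$, conditions \ref{P4}--\ref{P9} depend only on whether $u\in c'(u)$ and $u\in Odd_{G'}(c'(u))$, both unchanged by the structural fact above, so they transfer verbatim from $(c,\prec)$. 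For \ref{P1}, \ref{P2}, \ref{P3} I would argue that $v$ never lies in $c'(u)=c(u)$, so \ref{P1} is untouched; if $v\in Odd_{G'}(c(u))$ then \ref{P2} and \ref{P3} impose nothing new on $v$ precisely because $\lambda'(v)=Z\in\{Y,Z\}$ and $Z\neq Y$; and all order obligations among old vertices coincide with the original ones because $\prec'$ extends $\prec$ (the extra relations $v\prec'w$ only strengthen the premise $\neg(u\prec'w)$ of \ref{P3}, so they cannot create a new obligation).

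The step I expect to be the main obstacle is controlling the interaction of the arbitrary new edges with the old flow conditions: a priori, joining $v$ to old vertices could toggle membership in some $Odd(c(u))$ and thereby break \ref{P2} or \ref{P3} for an old $u$. The two observations that neutralise this are that the new edges touch only $v$ (preserving all parities among old vertices) and that the label $Z$ exempts $v$ from exactly the order and biconditional obligations its appearance in a correction set or odd neighbourhood would otherwise trigger. Placing $v$ as a minimal element is then what discharges the sole remaining obligation, namely $v$'s own instance of \ref{P2}, completing the construction.
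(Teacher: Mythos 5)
Your proposal is correct. Note that the paper itself does not prove Lemma \ref{Z intro} at all: it imports the statement by citation (Proposition 4.1 of the McElvanney--Backens rewrite-rules paper), so there is no in-paper proof to compare against; your argument serves as a self-contained replacement, and it follows the construction one would expect from that source. The three ingredients you isolate are exactly the right ones: (i) every new edge is incident to $v$, so for any $A \subseteq V$ the membership of old vertices in $Odd(A)$ is the same in $G$ and $G'$, hence \ref{P4}--\ref{P9} and the old-vertex instances of \ref{P1}--\ref{P3} transfer verbatim; (ii) the label $Z$ means that even when $v$ lands in $Odd_{G'}(c(u))$ for an old $u$, the premises of \ref{P2} and \ref{P3} fail ($Z \in \{Y,Z\}$ and $Z \ne Y$), so no new obligation arises; (iii) setting $c'(v) = \{v\}$ discharges \ref{P8}, and placing $v$ as a global minimum of $\prec'$ discharges $v$'s own instance of \ref{P2} regardless of which old vertices $v$ was joined to, while leaving the order on old pairs untouched. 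One cosmetic remark: your phrase that the extra relations $v \prec' w$ ``strengthen the premise'' of \ref{P3} is slightly off --- for old pairs $(u,w)$ the order is literally unchanged, so the premise $\neg(u \prec' w)$ is neither strengthened nor weakened --- but this does not affect the argument, since either way no new \ref{P3} obligation can appear among old vertices.
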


The proof for the removal appears in \cite[Lemma D.6]{simmonsRelatingMeasurementPatterns2021}, and is a reformulation of \cite[Lemma 4.7]{backensThereBackAgain2021}.

Now, we extend the corollary \ref{red adj matrix invertibility} to also capture $Z$ measurements, by considering a slightly different matrix that we call the flow matrix.

\begin{definition}[Flow matrix]\label{flow mat def}
    Let $\mathcal{G} = (G,I,O,\lambda)$ be a labelled open graph with $\lambda(v) \in \{ X, Z \}$ for all $v \in \overline{O}$. Let $G_{disc}$ be $G$ with $Z$ labelled vertices disconnected from the rest of the graph. We define the \textbf{flow matrix} $M_{\mathcal{G}}$ as the sum of the reduced adjacency matrix $A_{G_{disc}}\mid_{\overline{I}}^{\overline{O}}$ and matrix that has $1$ at the intersections of $v$ row and $v$ column for all $v$ with $\lambda(v) = Z$, and $0$ otherwise.
\end{definition}

\begin{theorem}\label{flow mat invertibility}
    Let $\mathcal{G} = (G, I, O, \lambda)$ be a labelled open graph with $\lambda(v) \in \{ X, Z \}$ for all $v \in \overline{O}$. Then $\mathcal{G}$ has focussed Pauli flow if and only if $M_{\mathcal{G}}$ is right-invertible over $\mathbb{F}_2$.
\end{theorem}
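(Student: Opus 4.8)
The plan is to reduce to the $X$-only situation already settled by Corollary \ref{red adj matrix invertibility}, using the $Z$-removal and $Z$-introduction lemmas to bridge flow existence, and then to show that the flow matrix $M_{\mathcal{G}}$ differs from the reduced adjacency matrix of the $Z$-free subgraph only by adjoining an identity block, which leaves right-invertibility unchanged. Since we may assume $I \cap O = \emptyset$, the labelling constraint forces every input to be $X$-labelled: for $v \in I$ we have both $\lambda(v) \in \{X,XY,Y\}$ and $\lambda(v) \in \{X,Z\}$, hence $\lambda(v) = X$. Consequently every $Z$-labelled vertex is internal. I would write $B_Z$ for the set of $Z$-labelled vertices, $B_X$ for the $X$-labelled internal vertices, $G' = G[V \setminus B_Z]$ for the graph with all $Z$-vertices deleted, and $\mathcal{G}' = (G',I,O,\lambda')$ for the labelled open graph whose labelling $\lambda'$ is $X$ everywhere on $\overline{O}'$.

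The key computation is to unpack Definition \ref{flow mat def}. Indexing the rows of $M_{\mathcal{G}}$ by $\overline{O} = I \cup B_X \cup B_Z$ and its columns by $\overline{I} = O \cup B_X \cup B_Z$, I would check directly that it takes the block-diagonal form
\[
M_{\mathcal{G}} = \begin{pmatrix} N & 0 \\ 0 & Id_{B_Z} \end{pmatrix},
\]
where $N$ is the submatrix on rows $I \cup B_X$ and columns $O \cup B_X$. Indeed, disconnecting the $Z$-vertices zeroes every entry lying in a $B_Z$-row or $B_Z$-column of $A_{G_{disc}}\mid_{\overline{I}}^{\overline{O}}$; the added diagonal term then supplies exactly $Id_{B_Z}$ on the $B_Z$-block; and for rows and columns outside $B_Z$ the entries of $A_{G_{disc}}$ agree with those of $A_G$, which are precisely the adjacencies surviving in $G'$. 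Thus $N$ coincides with the reduced adjacency matrix $A_{G'}\mid_{\overline{I}'}^{\overline{O}'}$, where $\overline{O}' = I \cup B_X$ and $\overline{I}' = O \cup B_X$ since deleting internal vertices leaves $I$ and $O$ untouched.

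With this decomposition the theorem follows from a chain of equivalences. Over $\mathbb{F}_2$ right-invertibility is the same as full row rank, and because of the identity block $M_{\mathcal{G}}$ has full row rank if and only if $N$ does; hence $M_{\mathcal{G}}$ is right-invertible iff $N = A_{G'}\mid_{\overline{I}'}^{\overline{O}'}$ is. By Corollary \ref{red adj matrix invertibility} the latter holds iff the $X$-only graph $\mathcal{G}'$ has focussed Pauli flow, which by Theorem \ref{foc flow exists} is equivalent to $\mathcal{G}'$ having Pauli flow. Finally, iterating Lemma \ref{Z removal} shows that if $\mathcal{G}$ has Pauli flow then so does $\mathcal{G}'$, while iterating Lemma \ref{Z intro} (re-adding the $Z$-vertices one at a time with exactly their edges in $G$) gives the converse; a last appeal to Theorem \ref{foc flow exists} translates this back into focussed Pauli flow for $\mathcal{G}$, closing the loop.

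I expect the only delicate point to be the block-diagonal computation itself: one must verify that $B_Z$ indexes both a set of rows and a set of columns consistently, so that the added diagonal genuinely yields an identity block, and that removing the $Z$-vertices disturbs no adjacency among the surviving vertices, so that $N$ is literally the reduced adjacency matrix of $G'$ rather than merely similar to it. Everything after that is bookkeeping together with the cited results.
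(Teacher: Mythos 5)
Your proposal is correct and follows essentially the same route as the paper's proof: both reduce to the $X$-only case via the $Z$-removal and $Z$-introduction lemmas together with Theorem \ref{foc flow exists}, invoke Corollary \ref{red adj matrix invertibility} for that case, and observe that the $Z$-labelled vertices contribute an identity block to $M_{\mathcal{G}}$ that cannot affect right-invertibility. The only difference is presentational: you exhibit the block-diagonal form in one step, whereas the paper strips the $Z$ rows and columns one at a time through a chain of intermediate matrices.
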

\begin{proof}
    Let $G_{disc}$ be as in the definition \ref{flow mat def}. Let $G_X = (V_X, E_X)$ be $G$ (equivalently $G_{disc}$) with $Z$-labelled vertices removed from the rest of the graph. Then, the following facts are equivalent:\begin{enumerate}[noitemsep]
        \item $\mathcal{G}$ has focussed Pauli flow,
        \item $\mathcal{G}_X := (G_X, I, O, \lambda_X)$ has focussed Pauli flow, where $\lambda_X = \lambda\mid_{\left\{v \in \overline{O} \mid \lambda(v) = X\right\}}$,
        \item $A_{G_X}\mid_{V_X\setminus I}^{V_X\setminus O}$ is right-invertible,
        \item $M_{\mathcal{G}_X}$ is right-invertible,
        \item $M_{\mathcal{G}_{disc}}$ is right-invertible, where $\mathcal{G}_{disc} = (G_{disc}, I, O, \lambda)$,
        \item $M_{\mathcal{G}}$ is right-invertible.
    \end{enumerate}
    
    $(1 \Leftrightarrow 2):$ follows from lemmata \ref{Z removal} and \ref{Z intro} -- the $Z$ labelled vertices can be removed and introduced without affecting flow existence; the existences of Pauli flow and focussed Pauli flow are equivalent by theorem \ref{foc flow exists}.
    
    $(2 \Leftrightarrow 3):$ follows from corollary \ref{red adj matrix invertibility}, as in $G_X$ there are no $Z$ labelled vertices.

    $(3 \Leftrightarrow 4):$ the notions of reduced adjacency matrix and flow matrix agree in the case of only $X$ measurements, thus $A_{G_X}\mid_{V_X\setminus I}^{V_X\setminus O} = M_{\mathcal{G}_X}$.

    $(4 \Leftrightarrow 5):$ if there are no $Z$ labelled vertices in $G_{disc}$, then $G_{disc} = G_X$ and matrices from $4$ and $5$ are equal. Otherwise, let $v \in \overline{O}$ be any vertex with $\lambda(v) = Z$. In the flow matrix $M_{\mathcal{G}_{disc}}$, the $v$ row and the $v$ column are $0$ everywhere except for the intersection, as $v$ is disconnected from other vertices by assumption. The intersection of the $v$ row and the $v$ column contains $1$ by the construction of the flow matrix. Thus, $M_{\mathcal{G}_{disc}}$ is right-invertible if and only if its minor obtained by removing the $v$ row and the $v$ column is right-invertible. Performance of such removals for all $Z$ labelled vertices results in $M_{\mathcal{G}_X}$, and thus $M_{\mathcal{G}_{disc}}$ is right-invertible if and only if $M_{\mathcal{G}_X}$ is.

    $(5 \Leftrightarrow 6):$ by construction of flow matrix, $M_{\mathcal{G}_{disc}} = M_{\mathcal{G}}$ and the equivalence follows.

    Thus, $1 \Leftrightarrow 6$, as required.
\end{proof}

The above theorem will be essential in the next subsection. See figure \ref{fig:ex233 1z flow mat} for an example.

\begin{figure}
    \centering
    \begin{subfigure}[b]{0.3\textwidth}
        \centering
        $\begin{pNiceArray}{ccc|ccc}[first-row,last-row=6,first-col,last-col,margin]
              & C & D & E & F & G & H & \\
            A & 1 & 1 & 0 & 0 & 0 & 0 & A\\
            B & 1 & 1 & 1 & 0 & 0 & 0 & B\\
            \hline
            C & 0 & 0 & 0 & 0 & 1 & 1 & C\\
            D & 0 & 0 & 0 & 0 & 1 & 1 & D\\
            E & 0 & 0 & 0 & 1 & 0 & 0 & E\\
              & C & D & E & F & G & H & \\
        \end{pNiceArray}$
        \caption{The reduced adjacency matrix of the graph in \ref{fig:ex233 no flow}. This matrix is not right-invertible, so there is no Pauli flow.}
        \label{fig:ex233 no flow mat}
    \end{subfigure}
    \hfill
    \begin{subfigure}[b]{0.3\textwidth}
        \centering
        $\begin{pNiceArray}{ccc|ccc}[first-row,last-row=6,first-col,last-col,margin]
              & C & D & E & F & G & H & \\
            A & 1 & 0 & 0 & 0 & 0 & 0 & A\\
            B & 1 & 0 & 1 & 0 & 0 & 0 & B\\
            \hline
            C & 0 & 0 & 0 & 0 & 1 & 1 & C\\
            D & 0 & 1 & 0 & 0 & 0 & 0 & D\\
            E & 0 & 0 & 0 & 1 & 0 & 0 & E\\
              & C & D & E & F & G & H & \\
        \end{pNiceArray}$
        \caption{The flow matrix of the graph in \ref{fig:ex233 1z flow}. This matrix is right-invertible, so there is Pauli flow. It differs from the matrix in \ref{fig:ex233 no flow mat} in vertex $D$.}
        \label{fig:ex233 1z flow mat}
    \end{subfigure}
    \hfill
    \begin{subfigure}[b]{0.36\textwidth}
        \centering
        $\begin{pNiceArray}{ccc|ccc}[first-row,last-row=6,first-col,last-col,margin]
              & C & D & E & F & G & H & \\
            A & y_C & y_D & 0 & 0 & 0 & 0 & A\\
            B & y_C & y_D & y_E & 0 & 0 & 0 & B\\
            \hline
            C & z_C & 0 & 0 & 0 & x_C & x_C & C\\
            D & 0 & z_D & 0 & 0 & x_D & x_D & D\\
            E & 0 & 0 & z_E & x_E & 0 & 0 & E\\
              & C & D & E & F & G & H & \\
        \end{pNiceArray}$
        \caption{The variable flow matrix of the open graph in \ref{fig:ex233} (ignoring the labellings). $z_v$ is a shorthand for $(1+x_v)(1+y_v)$ where $v \in \{C,D,E\}$.}
        \label{fig:ex233 var flow mat}
    \end{subfigure}
    
    \caption{Examples of the matrix interpretations of flows for labelled open graphs from \ref{fig:ex233}}
    \label{fig:ex233 mat}
\end{figure}

\subsection{Reduction to \texorpdfstring{$\MaxRank$}{MaxRank}}

Finally, we can transform $\FlowSearch$ into a special case of $\MaxRank$ problem.

\begin{definition}
    Let $(G,I,O)$ be any open graph, i.e.\ an input for $\FlowSearch$. We define the \textbf{variable flow matrix} $M'_{G,I,O}$ as a matrix obtained as follows. (As a reminder $B := V \setminus \{ I \cup O \}$.)\begin{enumerate}[noitemsep]
        \item Start with the reduced adjacency matrix $A_G\mid_{\overline{I}}^{\overline{O}}$.
        \item For each $v \in B$, multiply the $v$ row by a variable $x_v$ and the $v$ column by a variable $y_v$.
        \item For each $v \in B$, set the intersection of the $v$ row and the $v$ column to $(1+x_v)(1+y_v)$.
    \end{enumerate}
\end{definition}

For an example, consider figure \ref{fig:ex233 var flow mat}.

\begin{theorem}\label{main helper}
    The answer to the $(G,I,O)$ instance of $\FlowSearch$ is $True$ if and only if there exists a valuation to $\{0, 1 \}$ of all variables in $M'_{G,I,O}$ resulting in a right-invertible matrix, i.e.\ when the answer to the instance of $\MaxRank$ given by the pair $(M'_{G,I,O},\ |\overline{O} |)$ is ``YES''.
\end{theorem}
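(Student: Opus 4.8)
The plan is to read off a dictionary between $\{0,1\}$-valuations of the variables in $M'_{G,I,O}$ and the flow matrices $M_{\mathcal{G}}$ of Definition~\ref{flow mat def}, and then invoke Theorem~\ref{flow mat invertibility} together with the reductions of the previous subsections. Recall that those reductions let us assume $I \cap O = \emptyset$ and reduce $\FlowSearch$ to deciding whether some $\lambda \colon \overline{O} \to \{X,Z\}$ makes $(G,I,O,\lambda)$ have a focussed Pauli flow, which by Theorem~\ref{flow mat invertibility} happens exactly when $M_{\mathcal{G}}$ is right-invertible. Since $M'_{G,I,O}$ has $|\overline{O}|$ rows, right-invertibility over $\mathbb{F}_2$ coincides with full row rank $|\overline{O}|$; as the rank never exceeds the number of rows, a valuation makes the matrix right-invertible precisely when it witnesses rank $\ge |\overline{O}|$, i.e.\ when the $\MaxRank$ instance $(M'_{G,I,O}, |\overline{O}|)$ answers ``YES''. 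The only variables are $x_v, y_v$ for $v \in B$, and $x_v$ occurs only in row $v$ while $y_v$ occurs only in column $v$ (the diagonal entry $(1+x_v)(1+y_v)$ lies in both).

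First I would set up the dictionary. For a labelling $\lambda \colon \overline{O} \to \{X,Z\}$ (inputs are necessarily $X$, since a $Z$-labelled vertex lies in its own correction set by~\ref{P8} and hence in $\overline{I}$), define the valuation $x_v = y_v = 1$ when $\lambda(v) = X$ and $x_v = y_v = 0$ when $\lambda(v) = Z$. A direct check shows the evaluated matrix is exactly $M_{\mathcal{G}}$: an off-diagonal entry in position $(r,c)$ equals the edge bit scaled by the surviving factors $x_r, y_c$, which vanishes precisely when one endpoint is $Z$-labelled, matching the edge being cut in $G_{disc}$; meanwhile the diagonal entry $(1+x_v)(1+y_v)$ evaluates to $1$ exactly for $Z$-labelled $v$ and to $0$ otherwise, matching the added diagonal. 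This gives the forward direction immediately: if $\FlowSearch$ answers $True$, pick such a $\lambda$, and the corresponding valuation turns $M'_{G,I,O}$ into a right-invertible flow matrix.

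The substance is the converse, and it is the main obstacle, since a valuation need not be of the ``label'' form $x_v = y_v$. Here I would classify each $v \in B$ by the pair $(x_v,y_v)$ and argue a rounding step. If $(x_v,y_v) = (0,1)$, then row $v$ is identically zero (every off-diagonal factor carries $x_v = 0$, and the diagonal is $(1{+}0)(1{+}1) = 0$), so the matrix lacks full row rank; hence this case never arises in a right-invertible valuation. If $(x_v,y_v) = (1,0)$, then symmetrically column $v$ is identically zero; resetting $y_v$ to $1$ alters only column $v$ and so can only enlarge the column space, preserving full row rank, while the diagonal stays $0$, yielding the label value $(1,1)$. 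The one point demanding care is this rank monotonicity under replacing a zero column by a possibly nonzero one, which I would justify by noting that zeroing out a column cannot increase rank. These modifications for distinct vertices are independent, as changing $y_v$ touches only column $v$, so after performing them every $v \in B$ satisfies $x_v = y_v$, giving a genuine labelling $\lambda \colon \overline{O} \to \{X,Z\}$ whose flow matrix is right-invertible; Theorem~\ref{flow mat invertibility} then supplies a focussed Pauli flow, so $\FlowSearch$ answers $True$.
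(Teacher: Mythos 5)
Your proposal is correct and follows essentially the same route as the paper's own proof: the same dictionary between $\{X,Z\}$-labellings and valuations with $x_v = y_v$, the same exclusion of the $(x_v,y_v)=(0,1)$ case via a zero row, and the same rounding of $(1,0)$ to $(1,1)$ justified by the fact that replacing a zero column cannot decrease row rank. Your write-up is if anything slightly more explicit than the paper's (in verifying the entry-by-entry match with $M_{\mathcal{G}}$ and the independence of the column modifications), but there is no substantive difference in approach.
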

\begin{proof}
    By theorem \ref{red to x z}, we can consider only $\lambda$ with codomain $\{ X, Z \}$ and by theorem \ref{no i cap o}, we can can assume $I \cap O = \emptyset$.
    
    $(\Rightarrow)$: let $\lambda \colon \overline{O} \to \{ X, Z \}$ be a measurement labelling for which $(G,I,O,\lambda)$ has focussed Pauli flow. For $v \in B$, send $x_v$ and $y_v$ to $1$ when $\lambda(v) = X$ and to $0$ when $\lambda(v) = Z$. Then, under this valuation, $M'_{G,I,O}$ evaluates to $M_{(G,I,O,\lambda)}$ which is right-invertible by theorem \ref{flow mat invertibility}.

    $(\Leftarrow)$: let $\sigma$ be a valuation sending variables in $M'_{G,I,O}$ to $\{ 0, 1 \}$ for which $M'_{G,I,O}$ is right-invertible. Suppose $\sigma(x_v) = 0 \ne 1 = \sigma(y_v)$ for some $v \in B$. Under $\sigma$, the $v$ row in $M'_{G,I,O}$ equals $0$ everywhere, and the matrix cannot be right-invertible as it does not have maximal row rank, contradiction. Now, suppose $\sigma(x_v) = 1 \ne 0 = \sigma(y_v)$ for some $v \in B$. Under $\sigma$, the $v$ column in $M'_{G,I,O}$ equals $0$ everywhere. Changing $\sigma(y_v)$ to $1$ could change the entries of the $v$ column, otherwise leaving the matrix unchanged. Thus, such a change cannot lower the row rank and $M'_{G,I,O}$ would stay right-invertible. Hence, there exists a valuation $\sigma'$ resulting in right-invertible matrix with $\sigma'(x_v)=\sigma'(y_v)$ for all $v \in B$. Now, we define $\lambda \colon \overline{O} \to \{ X, Z \}$ as follows. For $i \in I$: $\lambda(i) = X$. For $v \in B$, set $\lambda(v) = X$ when $\sigma'(x_v) = \sigma'(y_v) = 1$ and $\lambda(v) = Z$ when $\sigma'(x_v) = \sigma'(y_v) = 0$. Then, $M_{(G,I,O,\lambda)}$ equals $M'_{G,I,O}$ under $\sigma'$. Thus, $M_{(G,I,O,\lambda)}$ is right-invertible and by theorem \ref{flow mat invertibility}, $(G,I,O,\lambda)$ has focussed Pauli flow, ending the proof.
\end{proof}

Two variables $x_v$ and $y_v$ for each $v \in B$ might seem unnecessary. Multiplying both the row and the column by $x_v$ and setting the intersection to $1+x_v$ would also work. The problem is that then each variable no longer appears in one column or row only, and theorem \ref{main coco result} would not work.

\begingroup
\def\thetheorem{\ref{main result}}
\begin{theorem}[Repeated]
    $\FlowSearch$ is in $\RM{RP}$.
\end{theorem}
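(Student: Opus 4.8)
The plan is to chain together the two ingredients already assembled in the paper: Theorem \ref{main helper}, which reduces a single $\FlowSearch$ instance $(G,I,O)$ to the single $\MaxRank$ instance $(M'_{G,I,O},\ |\overline{O}|)$, and Theorem \ref{main coco result}, which places a suitably restricted version of $\MaxRank$ over a finite field in $\RM{RP}$. The entire argument then reduces to checking that the $\MaxRank$ instance produced by the reduction lies within the class that Theorem \ref{main coco result} can handle, and that the reduction itself runs in polynomial time.

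First I would invoke Theorem \ref{main helper} to record that the answer to $\FlowSearch$ on $(G,I,O)$ is $True$ exactly when $(M'_{G,I,O},\ |\overline{O}|)$ is a ``YES'' instance of $\MaxRank$ over $\mathbb{F}_2$. Here right-invertibility of the $|\overline{O}| \times |\overline{I}|$ matrix $M'_{G,I,O}$ coincides with having full row rank $|\overline{O}|$, which is precisely the rank threshold $r = |\overline{O}|$ appearing in the instance, so the correspondence is clean.

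Next I would verify the three hypotheses of Theorem \ref{main coco result} for this instance. The underlying ring is the finite field $\mathbb{F}_2$, giving $R = E = S = \mathbb{F}_2$. For the variable-occurrence condition, the construction confines each $x_v$ to row $v$ (step 2 multiplies that row by $x_v$, and step 3 only inserts $x_v$ into the diagonal entry $(1+x_v)(1+y_v)$, which still lies in row $v$), and dually confines each $y_v$ to column $v$; hence every variable occurs in at most one row or at most one column. For multi-affinity, the off-diagonal entries take one of the forms $(A_G)_{a,b}$, $x_a(A_G)_{a,b}$, $y_b(A_G)_{a,b}$, or $x_a y_b(A_G)_{a,b}$, all of degree at most one in each variable, while the diagonal entries expand to $1 + x_v + y_v + x_v y_v$, again multi-affine. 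With all three hypotheses confirmed, Theorem \ref{main coco result} applies and decides the instance in $\RM{RP}$.

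It then remains to note that the reduction is efficient: $M'_{G,I,O}$ has dimensions $|\overline{O}| \times |\overline{I}|$ with entries of constant description length, so it is constructible in polynomial time from $(G,I,O)$, and precomposing a polynomial-time map with an $\RM{RP}$ decision procedure keeps the whole algorithm in $\RM{RP}$. I do not anticipate a genuine obstacle; the only point demanding care is the variable bookkeeping around the diagonal entries, since the naive single-variable construction (using one variable per internal vertex) would place a variable in both a row and a column and thereby push the instance outside the scope of Theorem \ref{main coco result} — which is exactly why the two-variable construction is used, as observed in the remark following Theorem \ref{main helper}.
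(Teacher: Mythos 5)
Your proposal is correct and follows exactly the paper's own route: reduce via Theorem \ref{main helper} to the $\MaxRank$ instance $(M'_{G,I,O},\ |\overline{O}|)$ over $\mathbb{F}_2$, then check that this instance satisfies the three hypotheses of Theorem \ref{main coco result} (finite field, each variable confined to one row or one column, multi-affine entries). Your write-up is simply a more detailed verification of these conditions than the paper gives, including the observation about why two variables per internal vertex are needed, which the paper makes in the remark following Theorem \ref{main helper}.
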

\addtocounter{theorem}{-1}
\endgroup
\begin{proof}
    By theorem \ref{main helper}, answering $(G,I,O)$ instance of $\FlowSearch$ is equivalent to answering instance of $\MaxRank$ given by $(M'_{G,I,O}, |\overline{O}|)$ over the field $\mathbb{F}_2$. The entries of $M'_{G,I,O}$ are all multi-affine -- they are either $0$, $1$, a variable, a product of two different variables or $(x_v+1)(y_v+1)$ for some $v \in B$. Hence, such matrix instances satisfy all conditions of theorem \ref{main coco result}, and the problem is in $\RM{RP}$.
\end{proof}

An analogous procedure can also determine whether a partial labelling $\lambda \colon \overline{O} \hookrightarrow \{ X, Z \}$ can be extended to a full labelling with flow. To do so, rather than using $\MaxRank$ on $M'_{G,I,O}$, we can consider it on $M'_{G,I,O}$ with some variables evaluated to $1$ or $0$ depending on $\lambda$. It means, that the problem of finding $\lambda$ such that $(G,I,O,\lambda)$ has Pauli flow is also solvable in random polynomial time, as captured by the following corollary with initial $\lambda = \emptyset$.

\begin{corollary}
    Given an open graph $(G,I,O)$ and a partial measurement labelling $\lambda$ with codomain $\{ X , Z \}$, it is in $\RM{RP}$ to check if $\lambda$ can be extended to a full labelling $\lambda'$ such that $(G,I,O,\lambda')$ has Pauli flow.
\end{corollary}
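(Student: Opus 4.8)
The plan is to adapt the proof of theorem \ref{main helper} almost verbatim, the only change being that the variables attached to the vertices already fixed by $\lambda$ are pre-substituted with the constants dictated by their labels, so that the residual $\MaxRank$ problem searches precisely over the \emph{extensions} of $\lambda$ rather than over all $\{X,Z\}$ labellings. Taking the empty partial labelling recovers theorem \ref{main result} exactly, which is the sanity check the surrounding discussion asks for.

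Concretely, I would first dispose of the normalisations already available. By theorem \ref{no i cap o} I may assume $I \cap O = \emptyset$; since $I \cap O \subseteq O$ is disjoint from $\overline{O}$, removing those vertices leaves the partial labelling $\lambda$ untouched. Next I would argue that extending $\lambda$ to a full labelling with Pauli flow is equivalent to extending it to a labelling with codomain $\{X,Z\}$ having (focussed) Pauli flow: the reductions of theorems \ref{red to x y z} and \ref{red to x z} only ever rewrite vertices labelled $XY$, $XZ$, $YZ$ or $Y$ and leave every $X$- and $Z$-labelled vertex fixed, so applying them to any flow-witnessing extension of $\lambda$ produces a $\{X,Z\}$-labelling that still agrees with $\lambda$ on its domain. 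Hence the two possible readings of ``full labelling'' coincide here.

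I would then form the variable flow matrix $M'_{G,I,O}$ and substitute, for each $v \in B$ on which $\lambda$ is defined, $x_v, y_v \mapsto 1$ when $\lambda(v) = X$ and $x_v, y_v \mapsto 0$ when $\lambda(v) = Z$ (vertices of $I$ in the domain of $\lambda$ are forced to $X$ and carry no variables, so require no action), obtaining a matrix $M''$ whose remaining variables are exactly those of the unlabelled internal vertices. Rerunning the argument of theorem \ref{main helper} under the constraint that the valuation be consistent with $\lambda$ then shows that $\lambda$ admits a flow-witnessing extension if and only if some valuation of the remaining variables makes $M''$ right-invertible, i.e.\ iff $(M'', |\overline{O}|)$ is a ``YES'' instance of $\MaxRank$ over $\mathbb{F}_2$. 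The only subtlety, the step forcing $\sigma(x_v) = \sigma(y_v)$, is automatic on the pre-substituted vertices and goes through unchanged on the rest, since it never alters the pre-substituted values.

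Finally I would check that $M''$ still meets the hypotheses of theorem \ref{main coco result}: substituting the constants $0$ or $1$ for some variables keeps every entry multi-affine, and deleting variables cannot violate the property that each surviving variable occurs in at most one row or at most one column. Theorem \ref{main coco result} therefore solves $(M'', |\overline{O}|)$ in $\RM{RP}$, which is what is required. I expect the only real work to be the bookkeeping of the third paragraph -- rerunning the proof of theorem \ref{main helper} with the fixed-variable constraint -- together with the one-line observation that the reduction theorems preserve $X/Z$ labels; everything else is inherited from the already-established machinery.
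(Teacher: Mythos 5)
Your proposal is correct and is essentially the paper's own argument: the paper establishes this corollary precisely by running $\MaxRank$ on $M'_{G,I,O}$ with the variables of already-labelled vertices pre-evaluated to $1$ or $0$ according to $\lambda$, exactly as you describe. Your additional bookkeeping (that theorems \ref{red to x y z} and \ref{red to x z} never relabel $X$- or $Z$-vertices, and that the substituted matrix still satisfies the hypotheses of theorem \ref{main coco result}) is a more careful spelling-out of what the paper leaves implicit, not a different route.
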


The pseudocodes of the algorithms arising from the above theorems and theorem \ref{main coco result} can be found in the appendix \ref{algo pseudocode}. In the appendix \ref{algo complexity}, we discuss the complexity and the possible implementation.

\subsection{Inputs and outputs}
Here, we show that given a labelled open graph with Pauli flow, it is always possible to reduce the number of outputs to match the number of inputs while preserving the existence of Pauli flow for some measurement labelling.

\begingroup\def\thetheorem{\ref{removing outputs}}\begin{theorem}[Repeated] Suppose $(G,I,O,\lambda)$ has Pauli flow and $|O|>|I|$. Then there exists a subset $O' \subseteq O$ such that $|O'| = |I|$ and a labelling $\lambda'$ such that $(G,I,O',\lambda')$ has Pauli flow.
\end{theorem}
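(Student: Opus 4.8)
The plan is to induct on $d := |O| - |I| \ge 0$, peeling off one output at a time while preserving flow. The base case $d = 0$ is trivial (take $O' = O$), so fix a labelled open graph with Pauli flow and $d \ge 1$, and first reduce to a clean setting. By Theorems~\ref{red to x z} and~\ref{no i cap o} I may assume $I \cap O = \emptyset$ and $\lambda \colon \overline O \to \{X,Z\}$, and by Lemma~\ref{Z removal} I delete the $Z$-labelled non-outputs to obtain $G_X$, all of whose non-outputs (inputs included) are $X$-labelled, with the same $I,O$ and still with a focussed Pauli flow $(c,\emptyset)$ (Theorem~\ref{foc flow exists}, Lemma~\ref{no order}). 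By Corollary~\ref{red adj matrix invertibility} the reduced adjacency matrix $A := A_{G_X}\mid_{\overline I}^{\overline O}$ has full row rank $|\overline O|$ over $\mathbb F_2$, and focussing in the all-$X$ case says precisely that $Odd(c(u)) \cap \overline O = \{u\}$ for every $u \in \overline O$.

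There are two ways to demote a single output $o$ of $G_X$ so that the result still has flow with one fewer output. Relabelling $o$ as an $X$-measured vertex appends the adjacency row $r_o := (A_{G_X})_o\mid_{\overline I}$ to $A$; by Corollary~\ref{red adj matrix invertibility} the enlarged graph keeps flow iff $A$ retains full row rank, i.e.\ iff $r_o \notin \operatorname{span}(\text{rows of }A)$. Alternatively, passing to $(G_X - o,\, I,\, O\setminus\{o\})$ replaces $A$ by $A$ with column $o$ deleted, which keeps full row rank iff column $o$ lies in the span of the remaining columns. Hence it suffices to prove that at least one $o \in O$ satisfies one of these two conditions; I can then recurse on the resulting all-$X$ graph, and at the very end re-insert all removed vertices (the originally $Z$-labelled ones together with those handled in the second case) as $Z$-measured non-outputs via Lemma~\ref{Z intro}, recovering $G$ with an output set $O' \subseteq O$ of size $|I|$.

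The heart of the argument is establishing this existence, and it is where I expect the only real difficulty. Suppose, for contradiction, that no output is demotable. Failure of the second option for every $o$ means every output column of $A$ is independent of the others, so no kernel vector of $A$ meets $O$ in its support; since $\dim \ker A = |\overline I| - |\overline O| = d \ge 1$, there is a nonempty $S \subseteq B$ with indicator $\mathbb 1_S \in \ker A$, that is $Odd(S) \subseteq O$. Failure of the first option for every $o$ means every output row lies in the row space of $A$, so $A_{G_X}\mid_{\overline I}^{V}$ has the same rank $|\overline O|$, hence the same $d$-dimensional kernel as $A$; therefore $Odd(S) \subseteq O$ is in fact forced to $Odd(S) = \emptyset$. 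Now fix $s \in S$ and pair $\mathbb 1_S$ against $\mathbb 1_{c(s)}$ through the symmetric adjacency matrix $\Gamma := A_{G_X}$. Evaluating one way, $\mathbb 1_S^{\top}\Gamma\,\mathbb 1_{c(s)} = |S \cap Odd(c(s))| = |S \cap \{s\}| = 1$, using $S \subseteq \overline O$ and $Odd(c(s)) \cap \overline O = \{s\}$ with $s \in S$; evaluating the other way, using $\Gamma^{\top} = \Gamma$, it equals $|c(s) \cap Odd(S)| = 0$. The resulting $1 = 0$ over $\mathbb F_2$ is the contradiction, so a demotable output always exists.

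This supplies the inductive step, and iterating drives $d$ to $0$. The linear-algebra translation of the two demotion moves into rank/kernel conditions is routine; the genuine content — and the step I would be most careful about — is the symmetric pairing above, since it is exactly there that the focussed all-$X$ structure ($Odd(c(s)) \cap \overline O = \{s\}$) is converted into a numerical contradiction. I would also double-check the bookkeeping that re-inserting the removed vertices as $Z$-measured non-outputs (Lemma~\ref{Z intro}) faithfully rebuilds the original $G$ and keeps $O' \subseteq O$.
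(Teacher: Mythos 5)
Your proof is correct, and it takes a genuinely different route from the paper's. The paper stays with the flow matrix $M_{(G,I,O,\lambda)}$ (so $Z$-labelled vertices remain, encoded by identity-style rows and columns), fixes one maximal invertible $|\overline{O}|\times|\overline{O}|$ minor, and runs a pigeonhole on the set $C$ of columns left out of it: $C$ is nonempty, cannot contain $Z$-columns, hence contains either an output (demoted to a $Z$-measured non-output via Lemma~\ref{Z intro}) or an internal $X$-vertex (flipped to $Z$ the same way); termination follows because each round decreases either $|O|$ or the number of $X$-labelled vertices. You instead strip the $Z$-vertices, induct on $d=|O|-|I|$, and prove the stronger per-step claim that some \emph{output} is always directly demotable --- either relabelled as an $X$-measured non-output (a move the paper never uses, available iff its row lies outside the row space of $A$) or deleted and later re-inserted as $Z$ (essentially the paper's output move). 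That stronger claim does not follow from the paper's pigeonhole, which may only offer an internal $X$-vertex to flip; you pay for it with the pairing of the indicator vectors of $S$ and $c(s)$ through the symmetric adjacency matrix $\Gamma$, which converts the focussed all-$X$ identity $Odd(c(s))\cap\overline{O}=\{s\}$ together with $Odd(S)=\emptyset$ into the contradiction $1=0$ over $\mathbb{F}_2$; this is the one step with real content, it is correct, and it is the only place either argument uses the correction function itself. The trade-offs: the paper's rounds are individually trivial, need neither focussing nor the correction function, and match its remark about implementation by basis-finding, but it may flip many internal labels and take up to $|B|+d$ rounds; your induction performs exactly $d$ demotions, keeps all internal labels fixed, and isolates a reusable structural fact (an all-$X$ open graph with flow and $|O|>|I|$ always has a directly demotable output). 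Your closing bookkeeping --- rebuilding $G$ by re-inserting removed vertices one at a time with their original edges via Lemma~\ref{Z intro}, and restoring $I\cap O$ via Theorem~\ref{no i cap o} --- does go through as you sketch it.
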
\addtocounter{theorem}{-1}\endgroup
\begin{proof}
    Let $M = M_{(G,I,O,\lambda)}$ be the flow matrix of $(G,I,O,\lambda)$ with Pauli flow and with $|O|>|I|$. By theorem \ref{flow mat invertibility}, $M$ is right-invertible. Hence, $M$ has $|\overline{O}| \times |\overline{O}|$ invertible minor. Let $C$ be the set of vertices corresponding to the columns that are not in such minor. Suppose that $o \in C \cap O$. The output $o$ can be removed from the graph without breaking the flow existence -- $M$ without $o$ column still contains an invertible square minor of maximal size. By lemma \ref{Z intro}, we can equivalently change $o$ to be $Z$ labelled without breaking flow existence. Now, assume that $C \cap O = \emptyset$. In the flow matrix, the column and row of $v$ with $\lambda(v) = Z$ are $0$ except for the intersection. Therefore, the $v$ column must be included in the maximal invertible minor and thus $v \notin C$. Hence, $C$ is a subset of the set of $X$ labelled vertices. Let $v \in C$. We can remove $v$ from the graph, keeping the flow existence: removal of $v$ column does not impact right-invertibility, and removal of $v$ row cannot break right-invertibility either -- other rows would remain linearly independent. By lemma \ref{Z intro}, $v$ can be reintroduced with the same neighbours as previously, but with $\lambda(v) = Z$. The same change can be applied to all vertices in $C$. Thus, if $|O|>|I|$ it is always possible to decrease the number of outputs or $X$ measured vertices. As the second does not affect $|O|>|I|$, by repeating this process, eventually the number of outputs must decrease and match $|I|$.
\end{proof}

For examples, see figure \ref{fig:ex233 outputs} in the appendix \ref{reducing number of outputs example}. The procedure used in the proof above can be efficiently implemented by utilizing the basis-finding algorithm i.e.\ Gaussian elimination. The basis-finding procedure has another usage. In the case of $X$ labelling only, suppose that we are given $O$ but not $I$. We can then find $I$ with $|I|=|O|$ resulting in flow, or determine that no flow exists for any set of inputs.

\begin{lemma}
    Let $G$ be a graph and $O \subseteq V$. Let $\lambda \colon \overline{O} \to \{X \}$. Then either $(G,I,O,\lambda)$ does not have Pauli flow for any $I$, or $(G,I,O,\lambda)$ has Pauli flow for some $I$ with $|I| = |O|$.
\end{lemma}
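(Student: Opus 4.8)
The plan is to reduce the question to the rank of a single matrix that does not depend on the choice of inputs, and then to read off a suitable $I$ from a column basis. Fix the $|\overline{O}| \times |V|$ matrix $N$ consisting of the rows of the adjacency matrix $A_G$ indexed by the non-outputs $\overline{O}$ (keeping all columns). The crucial observation is that the rows of the reduced adjacency matrix $A_G\mid_{\overline{I}}^{\overline{O}}$ are always exactly $\overline{O}$, independent of the choice of $I$: only the surviving columns $\overline{I}$ change as $I$ varies. Thus every matrix $A_G\mid_{\overline{I}}^{\overline{O}}$ is just the column-submatrix of $N$ selecting the columns indexed by $\overline{I}$.

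First I would establish that, for \emph{any} $I \subseteq V$, the graph $(G,I,O,\lambda)$ has Pauli flow if and only if $A_G\mid_{\overline{I}}^{\overline{O}}$ is right-invertible over $\mathbb{F}_2$, i.e.\ has full row rank $|\overline{O}|$. When $I \cap O = \emptyset$ this is exactly corollary \ref{red adj matrix invertibility} together with theorem \ref{foc flow exists}. When $I \cap O \ne \emptyset$, I would first apply theorem \ref{no i cap o} to pass to $(G',\,I\setminus O,\,O\setminus I,\,\lambda)$, where $G' = G[V\setminus(I\cap O)]$; a short set computation shows that the non-outputs of $G'$ are again $\overline{O}$ and its non-inputs are again $\overline{I}$, so since $G'$ is an induced subgraph the reduced adjacency matrix of $(G',\,I\setminus O,\,O\setminus I)$ coincides with $A_G\mid_{\overline{I}}^{\overline{O}}$. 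Applying corollary \ref{red adj matrix invertibility} to $G'$ then yields the same right-invertibility criterion on the original matrix.

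Given this characterization, the dichotomy follows from basic linear algebra. Right-invertibility of a column-submatrix of $N$ means that the selected columns span $\mathbb{F}_2^{|\overline{O}|}$; since discarding columns cannot raise the rank (and keeping all columns is itself an allowed choice), some $I$ yields flow if and only if $\rank N = |\overline{O}|$. If $\rank N < |\overline{O}|$, then no selection of columns attains full row rank, so $(G,I,O,\lambda)$ has no Pauli flow for any $I$. If instead $\rank N = |\overline{O}|$, I would use a basis-finding routine (Gaussian elimination) to select $|\overline{O}|$ linearly independent columns, indexed by a set $J$ with $|J| = |\overline{O}|$. Taking $\overline{I} = J$, i.e.\ $I = V \setminus J$, gives $|I| = |V| - |\overline{O}| = |O|$, and the reduced adjacency matrix $A_G\mid_{\overline{I}}^{\overline{O}}$ is then precisely the chosen invertible $|\overline{O}| \times |\overline{O}|$ minor, hence right-invertible; so $(G,I,O,\lambda)$ has Pauli flow.

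I expect the only genuine subtlety to be the need to allow $I \cap O \ne \emptyset$. The column basis $J$ need not contain all of the output columns of $N$ (this fails exactly when those columns are linearly dependent, e.g.\ two outputs sharing the same non-output neighbourhood), so some outputs may be forced into $I$. Moreover, when $|O| > |\overline{O}|$ there are not even enough non-output vertices to assemble an input set of size $|O|$ that is disjoint from $O$, so the requirement $|I| = |O|$ genuinely forces inputs among the outputs. This is precisely why the characterization must be carried through theorem \ref{no i cap o} for general $I$, rather than restricting to the disjoint case assumed elsewhere in the section.
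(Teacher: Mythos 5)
Your proposal is correct and follows essentially the same route as the paper: both work with the single matrix of non-output rows over all columns (the reduced adjacency matrix of $(G,\emptyset,O)$), observe via corollary \ref{red adj matrix invertibility} that choosing inputs only deletes columns and hence cannot raise the rank, and in the full-rank case turn the complement of a column basis into an input set of size $|V|-|\overline{O}|=|O|$. The only difference is that you justify the case $I \cap O \ne \emptyset$ explicitly through theorem \ref{no i cap o} and the induced-subgraph computation, a detail the paper compresses into the remark that ``some output could be changed to also be an input''.
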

\begin{proof}
    Consider $(G,\emptyset,O,\lambda)$. Suppose that it does not have Pauli flow. Then, the reduced adjacency matrix of $(G,\emptyset,O)$ is not right-invertible. Changing inputs to a different set than $\emptyset$ corresponds to the removal of columns but not rows from the reduced adjacency matrix -- it cannot make the matrix right-invertible. Hence there is no $I$ for which $(G,I,O,\lambda)$ has Pauli flow. Conversely, if $(G,\emptyset,O,\lambda)$ has Pauli flow, then we can choose $|\overline{O}| \times |\overline{O}|$ minor from the reduced adjacency matrix. The columns that are not chosen can be removed without breaking the flow, i.e.\ the corresponding vertices can be changed to be inputs. Note, that some output could be changed to also be an input. This process always turns $|V|-|\overline{O}| = |O|$ vertices into inputs, which ends the proof.
\end{proof}

Finally, again in the case of $X$ labelling only, suppose that we are given inputs. Can we find a minimal (smallest) set of outputs resulting in Pauli flow? The answer is yes.

\begin{lemma}
    Let $G$ be a graph and $I \subseteq V$. Let $\lambda(v) = X$ for $v \in V$. Then a minimal $O$ resulting in $(G,I,O,\lambda\mid_{\overline{O}})$ having Pauli flow can be efficiently found.
\end{lemma}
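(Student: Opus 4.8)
The plan is to recognise this as a pure linear-algebra problem about row rank, via Corollary~\ref{red adj matrix invertibility}. Because $\lambda$ assigns $X$ to every non-output, that corollary (together with Theorem~\ref{foc flow exists}, which equates the existence of Pauli flow with the existence of focussed Pauli flow) tells us that $(G,I,O,\lambda\mid_{\overline{O}})$ has Pauli flow exactly when the reduced adjacency matrix $A_G\mid_{\overline{I}}^{\overline{O}}$ is right-invertible over $\mathbb{F}_2$, i.e.\ has full row rank $|\overline{O}|$. I would fix the single ``master'' matrix $N := A_G\mid_{\overline{I}}$ of size $|V|\times|\overline{I}|$ (all rows, only the input columns removed) and note that its column set $\overline{I}$ does not depend on $O$; choosing a set of outputs $O$ simply deletes the rows indexed by $O$, leaving precisely $A_G\mid_{\overline{I}}^{\overline{O}}$, whose rows are the rows of $N$ indexed by $\overline{O}$.

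First I would restate the goal: $(G,I,O,\lambda\mid_{\overline{O}})$ has Pauli flow iff the rows of $N$ indexed by $\overline{O}$ are linearly independent. Minimising $|O|$ is therefore the same as maximising the number $|\overline{O}|$ of retained rows subject to their independence. The key quantity is $\rank N$: since any collection of more than $\rank N$ rows is linearly dependent, right-invertibility forces $|\overline{O}| \le \rank N$, giving the lower bound $|O| \ge |V| - \rank N$; conversely, any maximal linearly independent set $R$ of rows of $N$ has size exactly $\rank N$, and putting $O := V \setminus R$ makes the retained rows independent, hence the matrix right-invertible, so Pauli flow exists. Thus the minimum number of outputs is $|V| - \rank N$ and it is attained.

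To turn this into an efficient procedure I would run Gaussian elimination over $\mathbb{F}_2$ on $N$, greedily building $R$ by scanning the rows and keeping each one only if it is independent of those already chosen; this yields a maximal independent row set in polynomial time, and the algorithm returns $O = V \setminus R$. The one point needing care is that a maximal independent row set need not contain every input row, so the returned $O$ may overlap $I$; I would justify this using Theorem~\ref{no i cap o}, checking that deleting the vertices of $I\cap O$ leaves the reduced adjacency matrix of the reduced open graph equal to $N$ restricted to the rows $\overline{O}$, so the flow criterion is unaffected. This bookkeeping, rather than any deep argument, is the main obstacle; the minimality itself is immediate once the problem is phrased as maximising the number of independent rows.
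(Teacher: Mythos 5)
Your proposal is correct and follows essentially the same route as the paper: both reduce the problem via Corollary~\ref{red adj matrix invertibility} to finding a maximal linearly independent set of rows of the reduced adjacency matrix with $O=\emptyset$ (your matrix $N$ is exactly the paper's $M_{\emptyset}$), take $O$ to be the complementary vertices, and argue minimality from the fact that any larger retained row set would exceed the row rank. Your careful handling of the case $O \cap I \neq \emptyset$ via Theorem~\ref{no i cap o} is just a more explicit version of the paper's closing remark that some inputs may also become outputs.
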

\begin{proof}
    Let $M_{\emptyset}$ be the reduced adjacency matrix of $(G,I,\emptyset,\lambda)$. Let $D$ be a set of rows forming the basis of the space given by all rows. Let $C$ be the set of vertices whose rows are not in $D$. Then $O = C$ is the required minimal set -- clearly $(G,I,O,\lambda\mid_{\overline{O}})$ has Pauli flow -- in its reduced adjacency matrix $M_O$ the rows are linearly independent, so the matrix is right-invertible. Also, $O$ is minimal -- any smaller set $O'$ of outputs cannot result in a right-invertible reduced adjacency matrix, as such matrix would have more rows than $M_O$, and all of its rows would be from $M_{\emptyset}$. But $M_O$ has the maximal number of linearly independent rows, as those rows form a basis of the space spanned by the rows of $M_{\emptyset}$. Note, that some inputs could be changed to also be outputs.
\end{proof}

\section{Conclusions and further work}\label{conclusions and further work}
We have shown that given an open graph it is in $\RM{RP}$ to determine whether there exists a measurement labelling for which the open graph has Pauli flow. In other words, there is a random polynomial time algorithm deciding whether an open graph state can be used for any type of deterministic computation. To obtain this result, we developed an algebraic interpretation of flow for the case with two Pauli measurements $X$ and $Z$, and then performed a reduction to a known problem from computational complexity. We have also shown that the algebraic interpretation can be useful when looking for the necessary properties that a set of inputs or outputs must satisfy. In particular, we showed that it is always possible to reduce the number of outputs to match the number of inputs when it is allowed to change some measurement labels to $Z$. Our results contribute to the general picture of what the Pauli flow structure is and which open graphs can exhibit it.

Sometimes a graph state can be prepared, but it might be difficult to check if such a state can be useful in MBQC. A possible approach can be checking other states in orbit \cite{adcockMappingGraphState2020}. Our result can be interpreted as answering whether an open graph can be used in any form of quantum computation, or to give some conditions on such computation that are necessary to achieve deterministic labelled open graphs.

The remainder is a discussion of possible future work.

\paragraph{Polynomial time}
When showing that some problem, $\FlowSearch$ in our case, is in $\RM{RP}$, it is natural to ask whether a problem is also in $\RM{P}$. It would be interesting to modify reduction to $\MaxRank$ to obtain instances for which polynomial time algorithms exist, like the instances in \cite{ivanyosDeterministicPolynomialTime2010}.

\paragraph{Other measurements}
The main trick we have used to show $\FlowSearch \in \RM{RP}$, was the restriction of possible measurements to Pauli $X$ and $Z$ measurements for which we developed algebraic interpretation. A similar interpretation of $XY$ planar measurements is known \cite{mhallaWhichGraphStates2014a} and may be extended to allow Pauli $Z$ measurements, but the structure is harder to work with due to the partial order requirements. Extending our results to also work for $XY$ and $Z$ measurements is an interesting direction for further research -- $X$ and $Z$ measurements result in the Clifford fragment which can be classically simulated \cite{gottesmanHeisenbergRepresentationQuantum1998a}, while $XY$ planar measurements are sufficient for universality even on cluster states \cite{mantriUniversalityQuantumComputation2017}. It would also be interesting to check if our approach can be used to minimise the number of vertices that need to be $Z$ measured to have flow. The main problem with adapting the presented methods to finding labels with planar measurements is the order. Without verifying the order (ignoring conditions \ref{P1}, \ref{P2}, \ref{P3}), a reduction to the $\MaxRank$ problem should still be possible. However, order changes everything. There might be a flow where a particular vertex has a label $X$ or $Z$. Yet, measuring such vertex in any plane can break the flow.

\paragraph{Altering the set of edges}
Another problem that can be interpreted as $\MaxRank$ instance is finding how to change the set of edges to get an open graph with flow. For instance, when all non-outputs are $X$ labelled, we can put a new variable in the place of $0$ entries corresponding to the lack of an edge between vertices. An interesting question would be to determine how many edges must be added to get flow or how many must be removed. If such a problem is tractable, it could have practical usage, where given an open graph state one could say how far the state is from one allowing deterministic computation. The number of edges that need to be flipped corresponds to the number of $CZ$ gates that must be applied.

\paragraph{Circuit extraction}
Our results also connect to the ZX calculus, where Pauli flow is a necessary condition for efficient circuit extraction \cite{simmonsRelatingMeasurementPatterns2021}. It would be interesting to expand on this connection, for instance, by attempting the classification of small open graphs, that, when contained in ZX diagram, are guaranteed to break flow, and thus should be avoided in any form of optimization utilizing ZX. Another connection could be circuit extraction from phase-free ZH -- $X$ and $Z$ measurements are sufficient for universality in MBQC on hypergraph states \cite{takeuchiQuantumComputationalUniversality2019}. There, the errors can be multi-qubit byproducts and their fixing is done by switching measurement bases. Our method of checking the existence of measurement labelling resulting in Pauli flow could be helpful when looking for measurements that can be swapped without removing the property of determinism (i.e.\ Pauli flow) in the subparts without any hyperedges.

\section*{Acknowledgement}
I thank Miriam Backens, Tommy McElvanney, and Korbinian Staudacher for helpful discussions. Special thanks go to my supervisor Miriam Backens, for sharing theorem \ref{red adj matrix invertibility with DAG} with me. I also thank anonymous reviewers for their useful comments.

\phantomsection

\addcontentsline{toc}{chapter}{Bibliography}

% Bibliography
\begin{sloppypar}
\bibliographystyle{eptcs}
\bibliography{myrefdb}
\end{sloppypar}

\appendix
\counterwithin{figure}{section}
\counterwithin{table}{section}

\section{More about \texorpdfstring{$\MaxRank$}{MaxRank}}\label{more about maxrank}

In order to prove theorem \ref{main coco result}, we need the following lemma.

\begin{lemma}\label{0 small 0 big}
    A multi-affine polynomial is $0$ over a finite field $\mathbb{F}_s$ if and only if it is $0$ over $\mathbb{F}_{s^k}$ for any $k \in \mathbb{Z}_+$.
\end{lemma}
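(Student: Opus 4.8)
The plan is to reduce the biconditional to a single structural fact: for any finite field $\mathbb{F}_q$ with $q \geq 2$, a multi-affine polynomial vanishes as a function on $\mathbb{F}_q$ if and only if it is the zero polynomial (every coefficient is $0$). Since every extension $\mathbb{F}_{s^k}$ contains $\mathbb{F}_s$ and both $s$ and $s^k$ are at least $2$, applying this fact with $q = s$ and with $q = s^k$ shows that each of the two vanishing conditions in the lemma is equivalent to the coefficient-free condition of being the zero polynomial, hence equivalent to one another. Being the zero polynomial does not depend on the ambient field, so this immediately closes the gap between the two sides.

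First I would fix the meaning of ``is $0$ over $\mathbb{F}_q$'' to be ``evaluates to $0$ for every assignment of the variables in $\mathbb{F}_q$'', i.e.\ is the zero function. With this reading the $(\Leftarrow)$ direction is immediate: if the polynomial vanishes on all of $\mathbb{F}_{s^k}$, it vanishes in particular on assignments drawn from the subfield $\mathbb{F}_s$. All the content is in the $(\Rightarrow)$ direction, which follows once the structural fact above is in hand.

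To prove the structural fact I would induct on the number of variables $n$. The base case $n = 0$ is a constant, which is the zero function exactly when it equals $0$. For the inductive step, the multi-affine hypothesis lets me write $p(x_1,\dots,x_n) = x_n\, q(x_1,\dots,x_{n-1}) + r(x_1,\dots,x_{n-1})$, where $q$ and $r$ are themselves multi-affine in the remaining variables. Fixing arbitrary values $a_1,\dots,a_{n-1} \in \mathbb{F}_q$ and writing $a = (a_1,\dots,a_{n-1})$, the univariate map $x_n \mapsto q(a)\,x_n + r(a)$ has degree at most $1$; a polynomial of degree at most $1$ over a field with at least two elements that vanishes everywhere must have both coefficients zero, so $q(a) = r(a) = 0$. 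As $a$ was arbitrary, both $q$ and $r$ vanish as functions on $\mathbb{F}_q$, and the inductive hypothesis forces each to be the zero polynomial; hence $p = x_n q + r$ is the zero polynomial as well.

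The step carrying all the weight is the use of the multi-affine assumption to bound the degree in each variable by $1$: this is precisely what guarantees that two field elements suffice to detect a nonzero coefficient, and it is why the lemma can fail for unrestricted polynomials (for instance $x^s - x$ vanishes on $\mathbb{F}_s$ but not on $\mathbb{F}_{s^k}$ for $k > 1$). I do not expect a genuine obstacle beyond the bookkeeping of the induction; the only point to confirm is $q \geq 2$, which holds automatically since every finite field has at least two elements.
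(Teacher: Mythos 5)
Your proof is correct. Note, though, that the paper does not argue this lemma at all: its ``proof'' is a one-line citation to \cite[Lemma 25 and Corollary 26]{bussComputationalComplexityProblems1999}, and those two results are precisely the structural fact you isolate (a multi-affine polynomial vanishing as a function on a set with at least two elements per coordinate is the zero polynomial, hence vanishing is insensitive to passing to a field extension). So your contribution is a self-contained, elementary proof of the content the paper outsources. The argument itself is sound: the decomposition $p = x_n q + r$ with $q, r$ multi-affine in the remaining variables is exactly what multi-affinity provides; the observation that a degree-$\le 1$ univariate polynomial vanishing on a field with at least two elements has both coefficients zero is correct; and the induction closes cleanly. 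You also handle the one genuinely subtle point -- that ``zero polynomial'' is a statement about coefficients, which lie in $\mathbb{F}_s \subseteq \mathbb{F}_{s^k}$ and hence do not change meaning when the ambient field is enlarged -- and your counterexample $x^s - x$ correctly pinpoints why the multi-affine hypothesis cannot be dropped. The only thing the citation route buys over yours is brevity; what yours buys is that the paper's appendix would become self-contained, since this lemma is the crux of why the Schwartz--Zippel argument in Theorem \ref{main coco result} can be run over a large extension field while the original question lives over $\mathbb{F}_2$.
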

\begin{proof}
    This follows from \cite[Lemma 25 and Corollary 26]{bussComputationalComplexityProblems1999}.
\end{proof}

Because of the above, instead of testing whether a multi-affine polynomial is $0$ over $\mathbb{F}_s$, we can test whether it is $0$ over some large field extension. In particular, the extension can be taken sufficiently large to ensure that the Schwartz-Zippel lemma \cite{ore1921höhere, schwartzFastProbabilisticAlgorithms1980, demilloProbabilisticRemarkAlgebraic1978, zippelProbabilisticAlgorithmsSparse1979} applies (adapted to $\mathbb{F}_{s^k}$ only):

\begin{theorem}\label{schwartz-zippel}
    Let $P \in \mathbb{F}_{s^k}[x_1, \dots, x_n]$ be a non-zero polynomial of total degree $d$ over $\mathbb{F}_{s^k}$. Let $a_1, \dots, a_n \in \mathbb{F}_{s^k}$ be chosen at random uniformly. Then:
    \begin{gather*}
        \Pr[P(a_1,\dots,a_n) = 0] \le \frac{d}{s^k}.
    \end{gather*}
\end{theorem}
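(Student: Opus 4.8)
The plan is to prove the bound by induction on the number of variables $n$, which is the classical route to the Schwartz-Zippel lemma. For the base case $n = 1$, I would invoke the fundamental fact that a non-zero univariate polynomial of degree $d$ over a field has at most $d$ roots; since $a_1$ is drawn uniformly from the $s^k$ elements of $\mathbb{F}_{s^k}$, the probability it lands on a root is at most $d/s^k$, as required.

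For the inductive step, I would regard $P$ as a polynomial in the single variable $x_n$ with coefficients in $\mathbb{F}_{s^k}[x_1, \dots, x_{n-1}]$, writing $P = \sum_{i=0}^{d_n} x_n^i \, c_i(x_1, \dots, x_{n-1})$, where $d_n = \deg_{x_n} P$ is the largest power of $x_n$ occurring and $c_{d_n}$ is not the zero polynomial. The key bookkeeping observation is that the leading coefficient $c_{d_n}$ has total degree at most $d - d_n$, since every monomial of $P$ contributing to $c_{d_n}$ carries a factor of $x_n^{d_n}$.

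The argument then splits on the event $A$ that $c_{d_n}(a_1, \dots, a_{n-1}) = 0$. By the induction hypothesis applied to $c_{d_n}$, which is a non-zero polynomial in $n-1$ variables, I obtain $\Pr[A] \le (d - d_n)/s^k$. Conditioned on the complementary event $\neg A$ — that is, on any fixing of $a_1, \dots, a_{n-1}$ for which $c_{d_n}$ does not vanish — the specialised polynomial $P(a_1, \dots, a_{n-1}, x_n)$ is a genuine non-zero univariate polynomial of degree $d_n$ in $x_n$; since $a_n$ is drawn independently of the other coordinates, the base case yields $\Pr[P = 0 \mid \neg A] \le d_n/s^k$. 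Combining these via $\Pr[P = 0] \le \Pr[A] + \Pr[P = 0 \mid \neg A]$ gives the desired bound $(d - d_n)/s^k + d_n/s^k = d/s^k$.

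The proof is entirely standard and presents no genuine obstacle; the only point demanding care is the probabilistic bookkeeping in the inductive step — ensuring that the independence of $a_n$ from $a_1, \dots, a_{n-1}$ is invoked correctly, so that the univariate bound may legitimately be applied after conditioning on $\neg A$, and verifying that the two contributions $d - d_n$ and $d_n$ sum back to the total degree $d$.
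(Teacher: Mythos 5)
Your proof is correct: the induction on the number of variables, the split on the event that the leading coefficient $c_{d_n}$ vanishes, the degree bookkeeping $\deg c_{d_n} \le d - d_n$, and the bound $\Pr[P=0] \le \Pr[A] + \Pr[P=0 \mid \neg A]$ are all sound, and the independence of $a_n$ from the remaining coordinates is exactly what licenses applying the univariate base case after conditioning on $\neg A$. The comparison with the paper is one-sided, however: the paper does not prove this theorem at all. It states it as the classical Schwartz--Zippel lemma, citing Ore, Schwartz, DeMillo--Lipton and Zippel, and merely specialises the statement to sampling uniformly from all of $\mathbb{F}_{s^k}$ rather than from an arbitrary finite subset $S$ of the field (the general bound being $d/|S|$). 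So where the paper buys brevity by outsourcing the result to the literature --- the lemma is only an ingredient in its proof of Theorem \ref{main coco result} --- your write-up buys self-containedness, supplying the standard inductive argument that those references contain. If you wanted your version to match the cited sources in generality, you would let $a_1, \dots, a_n$ be drawn uniformly from a finite subset $S \subseteq \mathbb{F}_{s^k}$ and replace $s^k$ by $|S|$ throughout; your argument goes through verbatim, since the base case uses only the fact that a non-zero degree-$d$ univariate polynomial has at most $d$ roots in the field.
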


We can now prove theorem \ref{main coco result}.

\begin{proof}[Proof of theorem \ref{main coco result}]
    Let $M, r$ be a matrix and an integer forming an input to $\MaxRank$ satisfying the conditions from the theorem statement. If $r > \min (m,n)$ or $r < 0$, the answer is ``NO'' and can be returned immediately, so assume $0 \le r \le \min (m,n)$. Consider any $r \times r$ minor $M'$ of $M$. We show $\det M'$ is multi-affine: consider any variable $x$ in $M'$. Then, $x$ appears in at most one row or one column of $M'$. By performing Laplace expansion on $M'$ in such row (column), we find that $\det M'$ is a sum of determinants of $(r-1) \times (r-1)$ minors of $M'$ that do not contain $x$ multiplied by elements of the row (column) used for the expansion that itself may contain $x$ only in degree $1$ due to multi-affinity assumption about matrix entries. Therefore, $x$ appears in degree at most $1$ in $\det M'$ and the same for all other variables of $\det M'$, so the determinant is multi-affine. Therefore, the method from \cite[Theorem 28]{bussComputationalComplexityProblems1999} applies. We present a modified version for clarity.

    Consider the following procedure. Let $p \le \frac{1}{2}$ be the desired error probability. It is sufficient to consider $p = \frac{1}{2}$, but we present also how to achieve arbitrarily small error probability. Given $M, r, p$, let $k$ be such that $\mathbb{F}_{s^k}$ has at least $\frac{t}{p}$ elements, i.e.\ $k = \left\lceil \log_s \frac{t}{p} \right\rceil$. Let $a_1, \dots, a_t$ be a randomly chosen valuation of $x_1, \dots, x_t$ from $\mathbb{F}_{s^k}$. Let $r_a = \rank M(a_1, \dots, a_t)$, which can be found by Gaussian elimination i.e.\ in polynomial time. The computations over finite field are possible in time polynomial in $\log_2 s$ and $k$ \cite{gashkovComplexityComputationFinite2013}. Return ``YES'' if and only if $r_a \ge r$. We show, that the following procedure shows $\RM{RP}$ containment.

    Suppose, that the actual answer to the instance is ``YES''. Then, under some valuation, $M$ has rank at least $r$. Under such valuation, $M$ must have $r \times r$ reversible minor. Let $M'$ be such minor. By the previous part, $\det M'$ is multi-affine. Let $d$ be the total degree of $\det M'$. Then, $d \le t$ again by multi-affinity. By lemma \ref{0 small 0 big}, $M'$ is $0$ over $\mathbb{F}_s$ if and only if it is $0$ over $\mathbb{F}_{s^k}$. Combining everything with the Schwartz-Zippel theorem \ref{schwartz-zippel}, we get that:
    \begin{gather*}
        \Pr[\det M'(a_1,\dots,a_t) =_{\mathbb{F}_{s^k}} 0] \le \frac{d}{s^k} \le \frac{t}{s^k} \le \frac{t}{t / p} = p
    \end{gather*}
    The same holds for all $r \times r$ minors of $M$ that are invertible under some valuation. Hence, with error probability at most $p$, a random valuation from $\mathbb{F}_{s^k}$ results in a non-root of some $r \times r$ minor's determinant of $M$ in which case rank of the minor under such valuation is $r$ and so $\rank M$ is at least $r$, i.e.\ the procedure above would find $r_a \ge r$ and return ``YES''. Hence, the procedure described above returns the correct answer with probability at least $1-p \ge \frac{1}{2}$.

    Now suppose, that the answer to the instance is ``NO''. Then, all $r \times r$ minors of $M$ must have determinants equal $0$ over $\mathbb{F}_s$. By multi-affinity, they are also equal $0$ over $\mathbb{F}_{s^k}$. Hence, a random valuation $a_1, \dots, a_t$ always results in $\rank M(a_1, \dots, a_t) \le k$. Hence, the procedure described above returns ``NO'' with probability $1$, ending the proof of containment in $\RM{RP}$.
\end{proof}

\begin{example}
    Consider the following matrix over $\mathbb{F}_2$:
    \begin{gather*}
    M = \begin{pNiceArray}{cccc}
        x_1 & 0 & 1 & 0\\
        0 & x_2 x_3 & 0 & 0 \\
        1 & 0 & x_1 & 0 \\
        x_1 & x_2 & x_3 & 0
    \end{pNiceArray}
    \end{gather*}
    Setting $x_1 = 0$ and $x_2, x_3 = 1$ results in $M$ having rank $3$. However, there is no valuation resulting in the matrix having rank $4$, as the fourth column contains $0$s only. Hence, the answer to instances $(M,1), (M,2), (M,3)$ of $\MaxRank$ is ``YES'', but the answer to $(M,4)$ is ``NO''.
\end{example}

Some closely related problems also defined in \cite{bussComputationalComplexityProblems1999} include $\problem{MinRank}$, $\problem{Sing}$, and $\NonSing$. $\problem{MinRank}$ takes the same inputs as $\MaxRank$ and asks whether a rank $\le r$ can be achieved. $\problem{Sing}$ and $\problem{NonSing}$ take a square matrix and ask whether the matrix can be made singular and non-singular respectively.

In general, these problems are hard or sometimes unsolvable. For instance, $\problem{MinRank}$ is undecidable when $R = \mathbb{Z}, E=S=\{0,1\}$ \cite{bussComputationalComplexityProblems1999}. The problems are very natural and often appear when working on any linear algebra problems. $\problem{Sing}$ is useful in cryptography due to its hardness (for example, see \cite{bardetImprovementsAlgebraicAttacks2020}). It is also interesting from a complexity perspective (for example, see \cite{mahajanComplexityMatrixRank2010}).

We only work with $\MaxRank$ over finite fields (in fact, later we only consider $\mathbb{F}_{2^k}$).  A variant where each variable can appear at most once is in $\RM{P}$. A slightly more general version where a variable can appear in at most one row or one column but an unlimited number of times is also in $\RM{P}$ \cite{ivanyosDeterministicPolynomialTime2010}. Note, that this result is not stronger than the presented theorem \ref{main coco result}, as the entries of the matrix there cannot include products of variables. When each variable can appear at most twice in the matrix, but not necessarily in one row or one column, the problem already becomes $\RM{NP}$-complete \cite{harveyComplexityMatrixCompletion2006}.

\section{Reduced adjacency matrix invertibility for \texorpdfstring{$X$}{X} and \texorpdfstring{$XY$}{XY} measurements}\label{red adj matrix invertibility with DAG proof}

\begin{figure}[H]
    \centering
    \begin{subfigure}[b]{0.4\textwidth}
        \centering
        $\scalebox{1}{\tikzfig{ex_x_xy}}$
        \caption{A labelled open graph with two inputs and two outputs containing Pauli flow. Vertex $U$ is $X$ labelled and all other non-outputs are $XY$ labelled.}
        \label{fig:ex x xy d}
    \end{subfigure}
    \hfill
    \begin{subfigure}[b]{0.4\textwidth}
        \centering
        $\scalebox{1}{\tikzfig{ex_x_xy_directed}}$
        \caption{A directed graph $F$ for the labelled open graph from \ref{fig:ex x xy d}, obtained from theorem \ref{red adj matrix invertibility with DAG}. It contains a cycle. The corresponding $F'$ contains only one edge $TU$ and is acyclic.}
        \label{fig:ex x xy d2}
    \end{subfigure}
%\iffalse
    \caption{An example of a labelled open graph with both $X$ and $XY$ measurements and explanation of theorem \ref{red adj matrix invertibility with DAG} acting on it.}
\end{figure}
\begin{figure}[H]
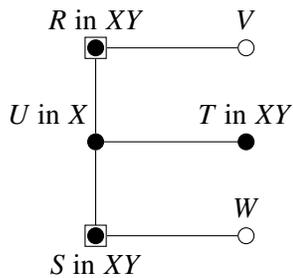
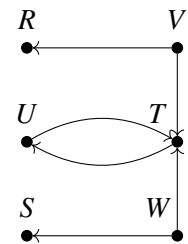
\ContinuedFloat
%\fi
    \begin{subfigure}[b]{0.4\textwidth}
        \centering\begin{gather*}
        \begin{pNiceArray}{cc|cc}[first-row,last-row,first-col,last-col,margin]
              & T & U & V & W & \\
            R & 0 & 1 & 1 & 0 & R\\
            S & 0 & 1 & 0 & 1 & S\\
            \hline
            T & 0 & 1 & 0 & 0 & T\\
            U & 1 & 0 & 0 & 0 & U\\
              & T & U & V & W & \\
        \end{pNiceArray}
        \end{gather*}
        \caption{Reduced adjacency matrix of the labelled open graph from \ref{fig:ex x xy d}.}
        \label{fig:ex x xy mat}
    \end{subfigure}\hfill
    \begin{subfigure}[b]{0.4\textwidth}
        \centering\begin{gather*}
        \begin{pNiceArray}{cc|cc}[first-row,last-row,first-col,last-col,margin]
              & R & S & T & U & \\
            T & 0 & 0 & 0 & 1 & T\\
            U & 0 & 0 & 1 & 0 & U\\
            \hline
            V & 1 & 0 & 1 & 0 & V\\
            W & 0 & 1 & 1 & 0 & W\\
              & R & S & T & U & \\
        \end{pNiceArray}
        \end{gather*}
        \caption{The inverse of matrix in \ref{fig:ex x xy mat}, i.e. the matrix $A_F\mid_{\overline{O}}^{\overline{I}}$ where $F$ is as in \ref{fig:ex x xy d2}.}
    \end{subfigure}
    \captionsetup{list=off,format=cont}
    \caption{An example of a labelled open graph with both $X$ and $XY$ measurements and explanation of theorem \ref{red adj matrix invertibility with DAG} acting on it.}
    \label{fig:ex x xy full}
\end{figure}

\section{Pseudocode for algorithms}\label{algo pseudocode}
In the pseudocode below, we do not explicitly construct an instance of matrix used for $\MaxRank$ problem, as, in practice, it might be difficult to explicitly construct a matrix with variables that can be substituted with values from large field extensions of $\mathbb{F}_2$. Instead, we only construct the matrix under some valuation.

\algrenewcommand\algorithmicdo{}
\algrenewcommand\algorithmicthen{}
\begin{algorithmic}[1]
\Statex Checks if a partial $X, Z$ labelling can be extended so that $(G,I,O,\lambda)$ has Pauli flow. The error probability must be bounded above by $p$.
\Procedure{FlowSearchAux}{$G,I,O,\lambda,p$}
    \State $M \leftarrow A_{G}\mid_{\overline{I}}^{\overline{O}}$ \Comment{Construction of reduced adjacency matrix}
    \State $Vars \leftarrow \emptyset$ \Comment{Initialize set of variables}
    \For {$v \in B$} \Comment{Detecting unlabelled vertices}
        \If {$\lambda(v)$ is defined}
            \If {$\lambda(v) == Z$} \Comment{\parbox[t]{.5\linewidth}{Updating the row and the column of $Z$ labelled vertex}}
                \State multiply $v$ row of $M$ by $0$
                \State multiply $v$ column of $M$ by $0$
                \State set the intersection of $v$ row and $v$ column of $M$ to $1$
            \EndIf
        \Else
            \State $Vars \leftarrow Vars \cup \{ x_v, y_v \}$
        \EndIf
    \EndFor
    \State $k \leftarrow \left\lceil \log_2 \frac{|Vars|}{p} \right\rceil$ \Comment{\parbox[t]{.5\linewidth}{Minimal $k$ such that $\mathbb{F}_{2^k}$ has at least $\frac{|Vars|}{p}$ elements and the error probability is below $p$.}}
    \State Randomly sample $\sigma \colon Vars \to \mathbb{F}_{2^k}$
    \For {$v \in B$} \Comment{\parbox[t]{.5\linewidth}{Construction of $M'_{G,I,O}$ under valuation $\sigma$, computations are done in $\mathbb{F}_{2^k}$}}
        \State multiply $v$ row of $M$ by $\sigma(x_v)$
        \State multiply $v$ column of $M$ by $\sigma(y_v)$
        \State set the intersection of $v$ row and $v$ column to $(\sigma(x_v)+1)(\sigma(y_v)+1)$
    \EndFor
    \State Gaussian eliminate $M$
    \State \Return $(\rowrank M == |\overline{O}|)$
\EndProcedure
\[\]
\Statex Main algorithm, returns $True$ if $(G,I,O,\lambda)$ has Pauli flow for some $\lambda$. The error probability must be bounded above by $p$.
\Procedure{FlowSearch}{$G,I,O,p$}
    \State \Return{\Call{FlowSearchAux}{$G,I,O,\emptyset,p$}}
\EndProcedure
\[\]
\Statex Finds $\lambda$ such that $(G,I,O,\lambda)$ has Pauli flow. Initially, checks whether any such $\lambda$ exists, up to error probability $p$.
\Procedure{FindLabelling}{$G,I,O,p$}
    \If {$ {\bf{not}}\ \Call{FlowSearch}{G,I,O,p}$}
        \State \Return{``NO $\lambda$ EXISTS''}
    \EndIf
    \State $\lambda \leftarrow \emptyset$ \Comment{Initialization of $\lambda$}
    \For{$v \in I$}
        \State $\lambda(v) \leftarrow X$ \Comment{Inputs must be $X$ labelled}
    \EndFor
    \For{$v \in B$}
        \State $confirm_v \leftarrow False$
        \State $current_v \leftarrow X$ \Comment{Initially attempt $X$ label}
        \While{${\bf{not}}\ confirm_v$} \Comment{Alternate $X$ and $Z$ labels until one works}
            \State $\lambda(v) \leftarrow current_v$
            \If {\Call{FlowSearchAux}{$G,I,O,\lambda,p$}} \Comment{\parbox[t]{.5\linewidth}{Note, that the error probability could be increased at the cost of possibly more tries being required.}}
                \State $confirm_v \leftarrow True$
            \Else
                \If {$current_v == X$}
                    \State $current_v \leftarrow Z$
                \Else
                    \State $current_v \leftarrow X$
                \EndIf
            \EndIf
        \EndWhile
    \EndFor
    \State \Return{$\lambda$}
\EndProcedure

\end{algorithmic}

\section{Complexity of algorithms}\label{algo complexity}
The most memory-expensive part of the algorithms is the creation of $M'_{G,I,O}$ under some valuation from $\mathbb{F}_{2^k}$ where $k$ depends on the desired error probability $p$ and the size of the input graph. Since $k = \left\lceil \log_2 \frac{|Vars|}{p} \right\rceil$ and $|Vars| \le 2 \cdot |B|$, we get that $k \in O\left(\log_2 \frac{|B|}{p}\right)$. The elements of such fields can be represented using $O(k)$ long vectors over $\mathbb{F}_2$ with the time complexity of basic arithmetic operations on such field bounded above by $O(k^2)$ \cite{gashkovComplexityComputationFinite2013}. Therefore, the memory requirement can be bounded above by $O\left(|\overline{O}|\times|\overline{I}|\times\log_2 \frac{|B|}{p}\right) \in O\left(n^2 \log_2 \frac{n}{p}\right)$ where $n = |V|$. The most time-expensive part of the algorithms are the Gaussian eliminations. Each Gaussian elimination requires $O\left(n^3\right)$ basic operations in $\mathbb{F}_{2^k}$. Thus, a (not very efficient) upper bound for the time complexity is $O\left(n^3 \log_2^2 \frac{n}{p}\right)$ for the decision variant and expected $O\left(n^4 \log_2^2 \frac{n}{p}\right)$ for the actual finding of the labelling resulting in a Pauli flow. Many programming languages offer packages for efficient computation in finite fields. For instance, in Python one can use Galois package \cite{hostetterGaloisPerformantNumPy2020fixed} which works very well for $\mathbb{F}_{2^k}$ with $k$ such that the precomputed Conway polynomial \cite{frankluebeckhomepagefixed} is known, for example, all $1 \le k \le 91$. Such values of $k$ are sufficient for all reasonable computations, as the error can be dropped below $\frac{1}{2^{50}}$. At that point, it is more likely for a random cosmic beam to corrupt the computation than to get an error due to the probabilistic nature of the algorithms.

\section{Figure for reducing the number of outputs}\label{reducing number of outputs example}
\begin{figure}[H]
    \centering
    \begin{subfigure}[b]{0.4\textwidth}
        \centering
        $\scalebox{1}{\tikzfig{ex_233_2o_flow}}$
        \caption{The open graph from \ref{fig:ex233 1z flow} with flow matrix from \ref{fig:ex233 1z flow mat}. Columns given by vertices $C$, $D$, $E$, $F$ and $H$ form the basis. Thus, the output $G$ can be removed (equivalently: changed to be $Z$ labelled).}
        \label{fig:ex233 2o flow}
    \end{subfigure}
    \hfill
    \begin{subfigure}[b]{0.4\textwidth}
        \centering
        $\scalebox{1}{\tikzfig{ex_233_changed}}$
        \caption{An example of a labelled open graph with Pauli flow in which no output can be immediately changed to be $Z$ measured. However, changing $D$ to be $Z$ measured makes it possible to also change output $G$ to be $Z$ measured.}
        \label{fig:ex233 changed}
    \end{subfigure}
    \begin{subfigure}[b]{0.4\textwidth}
        \centering\begin{gather*}
        \begin{pNiceArray}{ccc|ccc}[first-row,last-row=6,first-col,last-col,margin]
              & C & D & E & F & G & H & \\
            A & 1 & 1 & 0 & 0 & 0 & 0 & A\\
            B & 1 & 1 & 1 & 0 & 0 & 0 & B\\
            \hline
            C & 0 & 0 & 0 & 0 & 1 & 1 & C\\
            D & 0 & 0 & 0 & 0 & 1 & 0 & D\\
            E & 0 & 0 & 0 & 1 & 0 & 0 & E\\
              & C & D & E & F & G & H & \\
        \end{pNiceArray}
        \end{gather*}
        \caption{Flow matrix of the open graph in \ref{fig:ex233 changed}. Columns given by vertices $C$, $E$, $F$, $G$, $H$ form the basis. Thus, the vertex $D$ can be changed to be $Z$ labelled.}
        \label{fig:ex233 changed matrix}
    \end{subfigure}\hfill
    \begin{subfigure}[b]{0.4\textwidth}
        \centering\begin{gather*}
        \begin{pNiceArray}{ccc|ccc}[first-row,last-row=6,first-col,last-col,margin]
              & C & D & E & F & G & H & \\
            A & 1 & 0 & 0 & 0 & 0 & 0 & A\\
            B & 1 & 0 & 1 & 0 & 0 & 0 & B\\
            \hline
            C & 0 & 0 & 0 & 0 & 1 & 1 & C\\
            D & 0 & 1 & 0 & 0 & 0 & 0 & D\\
            E & 0 & 0 & 0 & 1 & 0 & 0 & E\\
              & C & D & E & F & G & H & \\
        \end{pNiceArray}
    \end{gather*}
        \caption{The flow matrix after switching vertex $D$ from \ref{fig:ex233 changed} to be $Z$ labelled. Columns given by vertices $C$, $D$, $E$, $F$, $H$ form a basis, so output $G$ can be removed (changed to be $Z$ labelled).}
    \end{subfigure}
    
    \caption{Examples of labelled open graphs with more outputs than inputs, and how the number of outputs can be reduced to match the number of inputs.}
    \label{fig:ex233 outputs}
\end{figure}

\end{document}